\let\emptyset\varnothing
\newcommand{\bs}[1]{\boldsymbol{#1}}
\renewcommand{\bs}[1]{\boldsymbol{#1}}
\newtheorem{definition}{Definition}
\newtheorem{lemma}{Lemma}
\newtheorem{proposition}{Proposition}
\newtheorem{corollary}{Corollary}
\newtheorem{example}{Example}
\begin{document}

\title{Quantifying high-order interdependencies \\via multivariate extensions of the mutual information}

\author{Fernando Rosas}
\email{f.rosas@imperial.ac.uk}
\affiliation{Centre of Complexity Science and Department of Mathematics, Imperial College London, London SW7 2AZ, UK}
\affiliation{Department of Electrical and Electronic Engineering, Imperial College London, , London SW7 2AZ, UK}

\author{Pedro A.M. Mediano}
\affiliation{Department of Computing, Imperial College London, London SW7 2AZ, UK}

\author{Michael Gastpar}
\affiliation{School of Computer and Communication Sciences, EPFL, Lausanne 1015, Switzerland}

\author{Henrik J. Jensen}
\affiliation{Centre of Complexity Science and Department of Mathematics, Imperial College London, London SW7 2AZ, UK}
\affiliation{Institute of Innovative Research, Tokyo Institute of Technology, Yokohama 226-8502, Japan}

\date{\today}

\begin{abstract}

This article introduces a model-agnostic approach to study statistical synergy,
a form of emergence in which patterns at large scales are not traceable from
lower scales. Our framework leverages various multivariate extensions
of Shannon's mutual information, and introduces the \textit{O-information} as a
metric capable of characterising synergy- and redundancy-dominated systems. We
develop key analytical properties of the O-information, and study how it
relates to other metrics of high-order interactions from the statistical
mechanics and neuroscience literature. Finally, as a proof of concept, we
use the proposed framework to explore the relevance of statistical synergy in
Baroque music scores.

\end{abstract}

\maketitle

\section{Introduction}

A unique opportunity in the era of ``big data'' is to make use of the abundant
data to deepen our understanding of the high-order interdependencies that are
at the core of complex systems. Plentiful data is nowadays available about e.g.
the orchestrated activity of multiple brain areas, the relationship between
various econometric indices, or the interactions between different genes. What
allows these systems to be more than the sum of their parts is not in the
nature of the parts, but in the structure of their
interdependencies~\cite{crutchfield1994calculi}. However, quantifying the
``synergy'' within a given set of interdependencies is challenging, especially
in scenarios where the number of parts is far below the thermodynamic limit.

The relevance of synergistic relationships related to high-order interactions
has been thoughtfully demonstrated in the theoretical neuroscience literature.
For example, studies on neural coding have shown that neurons can carry
redundant, complementary or synergistic information -- the latter corresponding
to neurons that are uninformative individually but informative when considered
together~\cite{schneidman2003synergy,latham2005synergy}. Also, studies on
retina cells suggest that high-order Hamiltonians are necessary for
representing neurons firing in response to natural images, while pairwise
interactions suffice for neurons responding to less structured
stimuli~\cite{ganmor2011sparse}. Lastly, neuroimaging analyses have pointed out
the compatibility of local differentiation and global integration of different
brain areas, and suggested this to be a key capability for enabling high
cognitive functions~\cite{tononi1998complexity,balduzzi2008integrated}. Various
metrics have been proposed to distinguish these high-order features in data,
including the \textit{redundancy-synergy
index}~\cite{gat1999synergy,chechik2002group,varadan2006computational},
\textit{connected information}~\cite{schneidman2003network}, \textit{neural
complexity}~\cite{tononi1994measure}, and \textit{integrated
information}~\cite{barrett2011practical,mediano2018measuring}. While being
capable of capturing features of biological relevance, most of these metrics
have ad hoc definitions motivated by specific research agendas, and have few
theoretical guarantees~\footnote{An exception is the connected information,
which can be elegantly derived from principles of information
geometry~\cite{amari2001information}; however, there are no known methods to
compute this metric from data.}.

A promising approach for addressing high-order interdependencies is
\textit{partial information decomposition} (PID), which distinguishes different
``types'' of information that multiple predictors convey about a target
variable~\cite{williams2010nonnegative,griffith2014quantifying,wibral2017partial}.
In this framework, \textit{statistical synergies} are structures (or
relationships) that exist in the whole but cannot be seen in the parts, being
this rooted in the elementary fact that variables can be pairwise independent
while being globally correlated. Unfortunately, the adoption of PID has been
hindered by the lack of agreement on how to compute the components of the
decomposition, despite numerous recent
efforts~\cite{barrett2015exploration,ince2017measuring,james2018unique,finn2018pointwise}.
Moreover, the practical value of PID is greatly limited by the
super-exponential growth of terms for large systems, although some applications
do exist \cite{tax2017partial,wibral2017quantifying}.

The crux of multivariate interdependencies is that information-theoretic
descriptions of such phenomena are not straighforward, as extensions of
Shannon's classical results to general multivariate settings have proven
elusive~\cite{el2011network}. The most well-established multivariate extensions
of Shannon's mutual information are the \textit{total
correlation}~\cite{watanabe1960information} and the \textit{dual total
correlation}~\cite{han1975linear}, which provide suitable metrics of overall
correlation strength. Their values, however, differ in ways that are hard to
understand, even gaining the adjective of ``enigmatic'' among
scholars~\cite{james2011anatomy,vijayaraghavan2017anatomy}. Other popular
extension of the mutual information is the \textit{interaction
information}~\cite{mcgill1954multivariate}, which is a signed measure obtained
by applying the inclusion-exclusion principle to the Shannon
entropy~\cite{ting1962,yeung1991}. Although this metric provides insighful
results when applied to three variables, its is not easily interpretable when
applied to larger groups~\cite{williams2010nonnegative}.

This paper proposes to study multivariate interdependency via two dual
persectives: as \textit{shared randomness} and as \textit{collective
constraints} \footnote{This disctinction might not have been stressed in the
past because most studies focus on bivariate interactions between two sets of
variables, for which these two effects are equivalent and equal to the mutual
information. However, for interactions involving three or more variables these
perspectives differ.}. This setup leads to the \textit{O-information}, which --
following Occam's razor -- points out which of these perspectives provides a
more parsimonious description of the system. The O-information is found to
coincide with the interaction information for the case of three variables,
while providing a more meaningful extension for larger system sizes.

We show how the O-information captures the dominant characteristic of
multivariate interdependency, distinguishing redundancy-dominated scenarios
where three or more variables have copies of the same information, and
synergy-dominated systems characterised by high-order patterns that cannot be
traced from low-order marginals. In contrast with existing quantities that
require a division between predictors and target variables, the O-information
is -- to the best of our knowledge -- the first symmetric quantity that can
give account of intrinsic statistical synergy in systems of more than three
parts. Moreover, the computational complexity of the O-information scales
gracefully with system size, making it suitable for practical data analysis.

In the sequel, Section~\ref{sec:fundamentals} introduces the notions of shared
randomness and collective constraints, and Sections~\ref{sec:3} and \ref{sec:4}
present the O-information and its fundamental properties.
Section~\ref{sec:other_metrics} compares the O-information with other metrics
of high-order effects, and Section~\ref{sec:5} presents a case study on music
scores. Finally, Section~\ref{sec:discussion} summarises our main conclusions.

\section{Fundamentals}
\label{sec:fundamentals}

This section introduces two fundamental perspectives from which one can develop
an information-theoretic description of a system, and explains how they enable
novel perspectives to study interdependency.

\subsection{Entropy and negentropy}\label{sec:entro_neg}
 
\begin{quotation} 
\noindent\textit{For every outside there is an inside and for every inside there is an outside. And although they are different, they always go together.} 
\end{quotation} 
\begin{flushright} 
Alan Watts, \textit{Myth of myself}
\end{flushright}

Following the Bayesian interpretation of information theory, we define the
\textit{information contained in a system} as the amount of data that an
observer would gain after determining its configuration -- i.e. after measuring
it~\cite{jaynes2003probability}. If each possible configuration is to be
represented by a distinct sequence of bits, source coding theory~\cite[Ch.
5]{cover2012elements} shows that an optimal (i.e. shortest) labelling depends
on prior information available before the measurement.
Information, hence, refers to how the state of knowledge of the observer
changes after the system is measured, quantifying the amount of bits that are
revealed through this process~\footnote{For a quantum-mechanical treatment of
this notion, see \cite[Ch. 2]{breuer2002theory}}.

Let us consider an observer measuring a system composed by
$n$ discrete variables, $\bm{X}^n=(X_1,\dots,X_n)$. If the observer only knows
that each variable $X_j$ can take values over a finite alphabet $\mathcal{X}_j$
of cardinality $|\mathcal{X}_j|$, the amount of information needed to specify
the state of $X_j$ is $\log |\mathcal{X}_j|$ 
(logarithms are calculated using base $2$ unless specified otherwise).
In contrast, if the observer knows that the system's behaviour follows a
probability distribution $p_{\bm{X}^n}$, then the average amount of information
in the system reduces to the \textit{entropy} $H(\bm{X}^n) \coloneqq
-\sum_{\bm{x}^n} p_{\bs{X}^n}(\bm x^n) \log p_{\bs{X}^n}(\bm x^n)
$~\cite{jaynes2003probability}. The difference
\begin{equation}
  \mathcal{N}(\bs{X}^n) \coloneqq \sum_{j=1}^n \log |\mathcal{X}_j| - H(\bs{X}^n) 
  \label{eq:negeee}%
\end{equation}
is known as \emph{negentropy}~\cite{brillouin1953negentropy}, and corresponds
to the information about the system that is disclosed by its statistics, before
any measurement takes place.

Probability distributions are, from this perspective, a compendium of soft and
hard constraints that reduce the effective phase space that the system can
explore -- hard constraints completely forbid some configurations, while soft
constraints make them improbable. Consequently, a given distribution divides
the phase space in an admisible region quantified by the entropy, and an
inadmissible region quantified by the negentropy~\footnote{This observation can
be made rigorous via the Shannon-McMillan-Breiman theorem~\cite[Sec.
3]{cover2012elements}.}. Each part describes the system's structure from a
different point of view: the entropy refers to what the system can do, while
the negentropy refers to what it can't.

\subsection{The two faces of interdependency}
\label{sec:two_faces}

\subsubsection{Collective constraints}

\begin{figure*}
 \centering
 \includegraphics[width=6.5in]{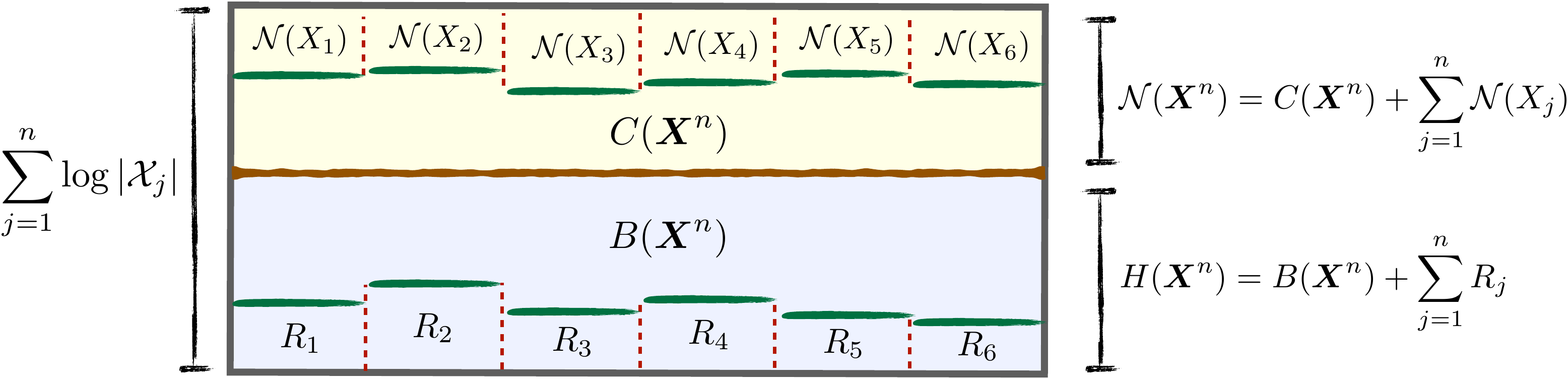}
 \caption{Following Eq.~\eqref{eq:info_dec}, the total information that can be stored in the    
   system $\bm X^n$ ($\sum_{j=1}^n \log | \mathcal{X}_j|$) is divided by a given state of knowledge (i.e. a
   probability distribution) into what is determined by the constraints
   ($\mathcal{N}(\bs{X}^n)$) and what is not instantiated until an actual
   measurement takes place ($H(\bs{X}^n)$). Moreover, both terms can be further
   decomposed into their individual and collective components, yielding different
   perspectives on interdependency seen as either collective constraints
   ($C(\bs{X}^n)$) or shared randomness ($B(\bs{X}^n)$).}
 \label{fig:diag1}
\end{figure*}

In the same way as $\mathcal{N}(\bs{X}^n)$ quantifies the strength of the
overall constraints that rule the system, the constraints that affect
individual variables are captured by the \emph{marginal negentropies}
$\mathcal{N}(X_j) \coloneqq \log |\mathcal{X}_j| - H(X_j)$. Intuitively, the
constraints that affect the whole system are richer than individual
constraints, as the latter do not take into account collective effects. Their
difference,
\begin{equation}\label{eq:TC}
  \begin{split}
    C(\bs{X}^n) \coloneqq& \; \mathcal{N}(\bs{X}^n) - \sum_{j=1}^n \mathcal{N}(X_j) \\
      =& \sum_{j=1}^n H(X_j) - H(\bs{X}^n) ~ ,
  \end{split}
\end{equation}
quantifies the strength of the ``collective constraints.'' This quantity is
known as \textit{total correlation}~\cite{watanabe1960information} (or
\textit{multi-information}~\cite{studeny1998multiinformation}). By re-writing
this relationship as $\mathcal{N}(\bs{X}^n) = \sum_j \mathcal{N}(X_j) +
C(\bs{X}^n)$ one finds that the constraints prescribed by the distribution are
of two types: constraints confined to individual variables, and collective
constraints that restrict groups of two or more variables.

\begin{example}\label{ex:2}
Consider $X_1$ and $X_2$ to be binary random variables with $p_{X_1, X_2}(0,1)
= p_{X_1, X_2}(1,0) = 1/2$. This distribution divides the total information
(two bits) into $H(X_1, X_2) = 1$ and $\mathcal{N}(X_1, X_2) = 1$ . Moreover,
$\mathcal{N}(X_1) = \mathcal{N}(X_2) = 0$ and therefore $C(X_1, X_2) =
\mathcal{N}(X_1, X_2) = 1$, confirming that the constraints act on both $X_1$
and $X_2$.

As a contrast, consider $Y_1$ and $Y_2$ binary random variables with
distribution $p_{Y_1,Y_2}(0,0) = p_{Y_1,Y_2}(1,0) = 1/2$. In this case
$\mathcal{N}(Y_1)=0$ while $\mathcal{N}(Y_2) = \mathcal{N}(Y_1, Y_2) = 1$,
showing that the only constraint in this system acts solely over $Y_2$.
Accordingly, for this case $C(Y_1, Y_2) = 0$.
\end{example}

\subsubsection{Shared randomness}\label{sec:shared_randomness}

As we did for $\mathcal{N}(\bm X^n)$, let us decompose $H(\bm X^n)$ in
individual and collective components. To do this, we introduce the quantity
$R_j = H(X_j|\bs{X}_{-j}^n)$ as a metric of how independent $X_j$ is from the
rest of the system $\bs{X}^n_{-j} = (X_1,\dots,X_{j-1},X_{j+1},\dots,X_n)$.
According to distributed source coding theory~\cite[Ch. 10.5]{el2011network},
$R_j$ corresponds to the data contained in $X_j$ that cannot be extracted from
measurements of other variables~\footnote{In fact, a direct calculation shows
that the variables $\bs{X}^n$ are independent if and only if
\unexpanded{$\sum_j R_j=H(\bs{X}^n)$}.}. The quantity $\sum_{j=1}^n R_j$ is
known as the \textit{residual entropy}~\cite{abdallah2012measure} (originally
introduced under the name of \textit{erasure
entropy}~\cite{verdu2006erasure,verdu2008information}), and quantifies the
total information that can only be accessed by measuring a specific variable,
i.e. the amount of ``non-shared randomness.'' Accordingly, the difference
\begin{equation}\label{eq:DTC}
B(\bs{X}^n) \coloneqq H(\bs{X}^n) - \sum_{j=1}^n R_j
\end{equation}
quantifies the amount of information that is shared by two or more variables --
equivalently, information that can be accessed by measuring more than one
variable. Although this quantity was introduced under the name of \textit{dual
total correlation}~\cite{han1975linear} (also known as \textit{excess entropy}
\cite{olbrich2008should} or \textit{binding information}
\cite{abdallah2012measure,vijayaraghavan2017anatomy}), we prefer the name
\textit{binding entropy} as it emphasises the fact that it is actually a part
of the entropy. As the entropy corresponds to the randomness within the system,
the binding entropy quantifies the ``shared randomness'' that exists among the
variables.

\begin{example}
Let us consider $X_1,X_2$ and $Y_1,Y_2$ from Example~\ref{ex:2}. For the former
system one finds that $R_1=R_2=0$ and hence $B(X_1,X_2) = H(X_1,X_2) = 1$,
which means that the randomness within the system can be retrieved from
measuring either $X_1$ or $X_2$. In contrast, when considering $Y_1,Y_2$ one
finds that $R_2 = 0$ and $R_1 = H(Y_1,Y_2) = 1$, and hence $B(Y_1,Y_2) = 0$.
This implies that the randomness of the system can be retrieved by measuring
only $Y_1$.
\end{example}

Wrapping up, one can re-write Eq.~\eqref{eq:negeee} using Eqs.~\eqref{eq:TC}
and \eqref{eq:DTC} and express the total information encoded in the system
described by $\bs{X}^n$ in terms of constraints and randomness:
\begin{equation}\label{eq:info_dec}
  \begin{split}
\sum_{j=1}^n &\log | \mathcal{X}_j| = \; \mathcal{N}(\bs{X}^n) + H(\bs{X}^n) \\
=& \underbrace{\left[ C(\bs{X}^n) + \sum_{j=1}^n \mathcal{N}(X_j) \right]}_{\mathclap{\substack{\mathrm{Collective~and~individual}\\[0.5ex]\mathrm{constraints}}}} + \underbrace{\left[ B(\bs{X}^n) + \sum_{j=1}^n R_j \right]}_{\mathclap{\substack{\mathrm{Shared~and~private}\\[0.5ex]\mathrm{randomness}}}}. 
  \end{split}
\end{equation}
This decomposition is illustrated in Figure~\ref{fig:diag1}.
%

\section{Introducing the O-information}
\label{sec:3}

\subsection{Definition and basic properties}

The total correlation and the binding entropy provide complementary metrics of
interdependence strength. Following Occam's Razor, one might ask which of these
perspectives allows for a shorter (i.e. more parsimonious) description. This is
answered by the following definition:

\begin{definition}\label{def:omega}
The \textit{O-information} (shorthand for ``information about Organisational
structure'') of the system described by the random vector $\bs{X}^n$ is defined
as
\begin{align}
\Omega(\bs{X}^n) \!\coloneqq& \; C(\bs{X}^n) - B(\bs{X}^n) \\
=& (n-2) H(\bs{X}^n)
+ \sum_{j=1}^n \big[ H(X_j) -  H(\bs{X}_{-j}^n) \big] . \nonumber
\end{align}
\end{definition}

Intuitively, $\Omega(\bs{X}^n) > 0$ states that the interdependencies can be
more efficiently explained as shared randomness, while $\Omega(\bs{X}^n) < 0$
implies that viewing them as collective constraints can be more convenient.
Note that $\Omega(\bs{X}^n)$ was first introduced as ``enigmatic information''
in Ref.~\cite{james2011anatomy}, although now that its properties have been
revealed we choose to give it a more appropriate name.

To develop some insight about the O-information, let us compare it with the
\textit{interaction information}~\footnote{The interaction information is
closely related to the \textit{I-measures}~\cite{yeung1991}, the
\textit{co-information}~\cite{Bell2003}, and the \textit{multi-scale
complexity}~\cite{bar2004multiscale}.}, which is a signed metric defined by
\begin{equation}\label{eq:interaction}
I(X_1;X_2;\dots;X_n) \coloneqq - \sum_{\bs{\gamma}\subseteq\{1,\dots,n\}} (-1)^{|\bs{\gamma}|} H(\bs{X}^{\bs{\gamma}})~,
\end{equation}
where the sum is performed over all subsets
$\bs{\gamma}\subseteq\{1,\dots,n\}$, with $|\bs{\gamma}|$ being the cardinality
of $\bs{\gamma}$ and $\bs{X}^{\bs{\gamma}}$ the vector of all variables with
indices in $\bs{\gamma}$. For $n=2$, Eq.~\eqref{eq:interaction} reduces to the
well-known \textit{mutual information},
\begin{equation}
I(X_1;X_2) = H(X_1) + H(X_2) - H(X_1,X_2)~. \nonumber
\end{equation}
For $n=3$, Eq.~\eqref{eq:interaction} gives
\begin{align}\label{eq:red-syn}
I(X_1;X_2;X_3) \! &= I(X_i;X_j) - I(X_i;X_j|X_k) \\
&= I(X_i;X_j) + I(X_i;X_k) - I(X_i;X_j,X_k) \nonumber
\end{align}
for $\{i,j,k\} = \{1,2,3\}$, which is known to measure the difference between
synergy and redundancy~\cite{williams2010nonnegative}. Specifically, redundancy
dominates when $I(X_1;X_2;X_3) \geq 0$; e.g. if $X_1$ is a Bernoulli random
variable with $p=1/2$ and $X_1=X_2=X_3$, then $I(X_1;X_2;X_3)=1$. In contrast,
synergy dominates when $I(X_1;X_2;X_3) \leq 0$, corresponding to statistical
structures that are present in the full distribution but not in the pairwise
marginals. For example, if $Y_1$ and $Y_2$ are independent Bernoulli variables
with $p=1/2$ and $Y_3 = Y_1 + Y_2 \pmod{2}$ (i.e. an \texttt{xor} logic gate)
then $I(Y_1;Y_2;Y_3) = -1$, since these variables are pairwise independent
while globally correlated~\cite{rosas2016understanding}. Unfortunately, for $n
\geq 4$ the co-information no longer reflects the balance between redundancy
and synergy~\cite[Section V]{williams2010nonnegative}.

To contrast with the interaction information, the next Lemma presents some
basic properties of $\Omega$ (the proofs are left for the reader).
\begin{lemma}\label{prop:basic}
The O-information satisfies the following properties:
\begin{itemize}
\item[(A)] $\Omega$ does not depend on the order of $X_1,\dots,X_n$.
\item[(B)] $\Omega(X_1,X_2) = 0$ for any $p_{X_1 X_2}$. 
\item[(C)] $\Omega(X_1,X_2,X_3) = I(X_1;X_2;X_3)$ for any $p_{\bm X^3}$. 
\end{itemize}
\end{lemma}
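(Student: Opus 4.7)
The plan is to verify each of the three claims directly from the explicit formula
\[
\Omega(\bs{X}^n) = (n-2)\,H(\bs{X}^n) + \sum_{j=1}^n \bigl[H(X_j) - H(\bs{X}_{-j}^n)\bigr]
\]
given in Definition~\ref{def:omega}. No deep information-theoretic identities are needed beyond the fact that the joint entropy $H(\bs{X}^{\bs{\gamma}})$ depends only on the set $\bs{\gamma}$, not on the ordering of its elements.

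For (A), I would observe that the leading term $(n-2)H(\bs{X}^n)$ is invariant under any permutation $\pi$ of the indices, since joint entropy is symmetric in its arguments. The remaining sum is taken over all $j\in\{1,\dots,n\}$, and a permutation merely relabels the summation index: each pair $(H(X_j), H(\bs{X}_{-j}^n))$ is sent to $(H(X_{\pi(j)}), H(\bs{X}_{-\pi(j)}^n))$, which again appears in the sum. Hence $\Omega$ is permutation-invariant.

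For (B), substituting $n=2$ makes the coefficient $(n-2)$ vanish, leaving only $[H(X_1) - H(X_2)] + [H(X_2) - H(X_1)] = 0$. For (C), I would plug $n=3$ into the formula to obtain
\[
\Omega(X_1,X_2,X_3) = H(X_1,X_2,X_3) + \sum_{j=1}^3 H(X_j) - \sum_{j=1}^3 H(\bs{X}_{-j}^3),
\]
where $H(\bs{X}_{-j}^3)$ ranges over the three pairwise joint entropies $H(X_2,X_3)$, $H(X_1,X_3)$, $H(X_1,X_2)$. Then I would expand the right-hand side of Eq.~\eqref{eq:interaction} for $n=3$, using $H(\emptyset)=0$, collecting by subset cardinality:
\[
I(X_1;X_2;X_3) = \sum_{j=1}^3 H(X_j) - \sum_{i<k} H(X_i,X_k) + H(X_1,X_2,X_3),
\]
and read off that the two expressions coincide term by term.

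There is no genuine obstacle here; the lemma is a direct bookkeeping exercise (which is presumably why the authors leave it to the reader). The only place where care is needed is in (C), namely tracking the alternating signs $(-1)^{|\bs{\gamma}|}$ in the inclusion--exclusion definition of $I(X_1;X_2;X_3)$ and checking that the singleton and triple terms carry the same sign as in $\Omega$, while the pair terms carry the opposite sign.
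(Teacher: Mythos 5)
Your proof is correct: the paper explicitly leaves these proofs to the reader, and the intended argument is exactly the direct verification from Definition~\ref{def:omega} that you carry out, with the expansions for (B), (C) and the sign bookkeeping in Eq.~\eqref{eq:interaction} all checking out. Nothing further is needed.
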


Property (A) shows that $\Omega$ reflects an intrinsic property of the system,
without the need of dividing the variables in groups with differentiated roles
(e.g. targets vs predictors, or input vs output). Property (B) confirms that
$\Omega$ captures only interactions that go beyond pairwise relationships.
Finally, Property (C) shows that when $n=3$ the O-information is equal to
$I(X_1;X_2;X_3)$. Interestingly, a direct calculation shows that if $n>3$ then
in general $\Omega(\bs{X}^n) \neq I(X_1;X_2;\dots;X_n)$.

At this stage, one might wonder if the O-information could provide a metric for
quantifying the balance of redundancy and synergy, as the interaction
information does for $n=3$. Intutively, one could expect redundant systems to
have small $B(\bs{X}^n)$ due to the multiple copies of the same information
that exist in the system, while having large values of $C(\bs{X}^n)$ because of
the constraints that are needed to ensure that the variables remain correlated.
On the other hand, synergistic systems are expected to have small values of
$C(\bs{X}^n)$ due to the few high-order constraints that rule the system, while
having larger values of $B(\bs{X}^n)$ due to the weak low-order structure.
These insights are captured in the following definition, which is supported by
multiple findings presented in the following sections.

\begin{definition}
If $\Omega(\bs{X}^n) > 0$ we say that the system is
\textit{redudancy-dominated}, while if $\Omega(\bs{X}^n) < 0$ we say it is
\textit{synergy-dominated}.
\end{definition}

In previous work we used another metric to assess synergy- and
redundancy-dominated systems~\cite{rosas2018selforg}. Appendix
\ref{app:selforg} provides an analytical and numerical account of the
consistency between these two metrics.

\subsection{Information decompositions}

\subsubsection{The lattice of partitions}
\label{sec:decompositions}\

\begin{figure*}
\begin{center}
\begin{adjustbox}{width=14cm}
  \includegraphics{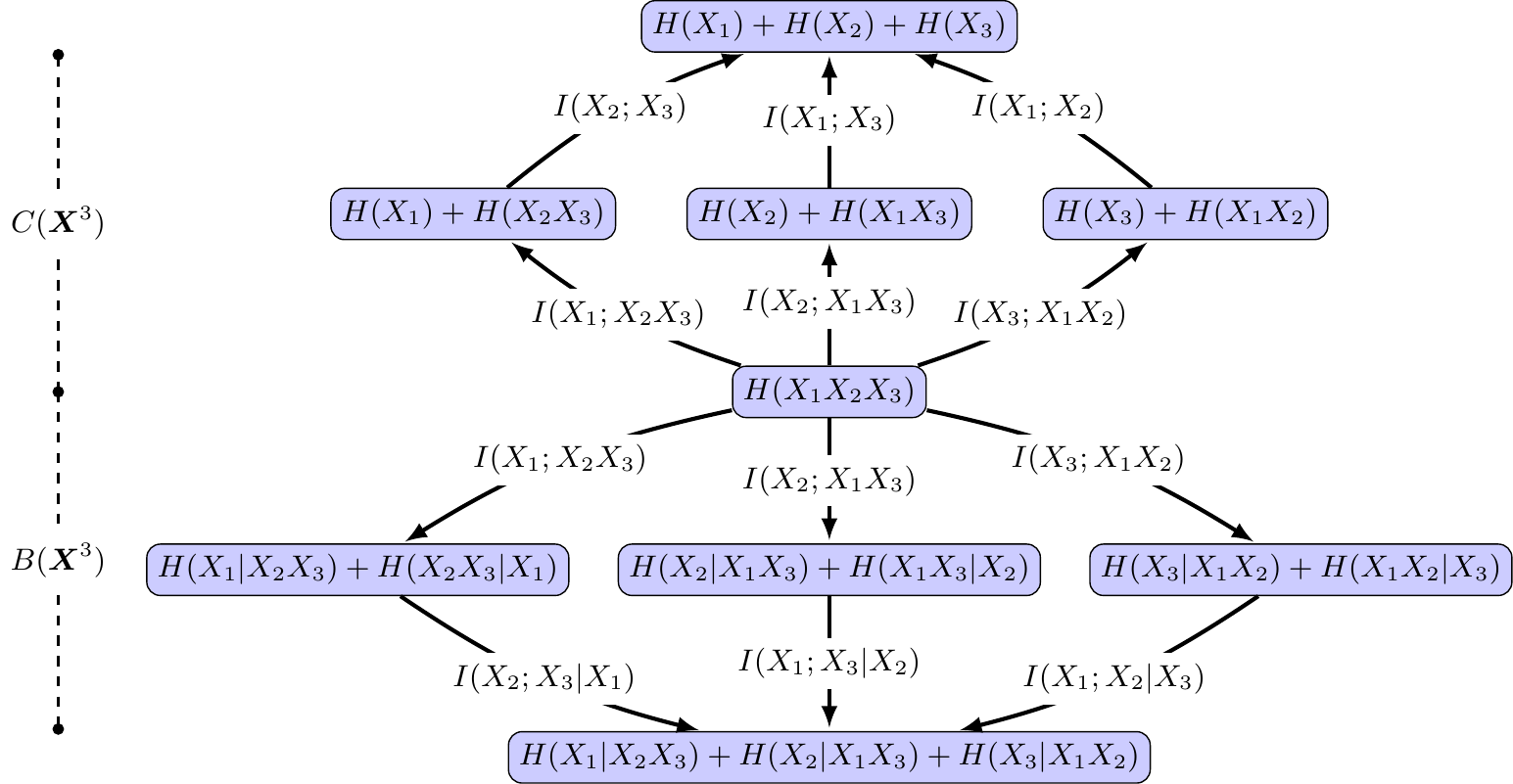}
\end{adjustbox}
\end{center}
\caption{\textit{Double diamond diagram} with the possible sequences of binary partitions of three variables.
Every path from the source node ($H(\bm X^3$) to the two sink nodes ($H(X_1)+H(X_2)+H(X_3)$ and $H(X_1|X_2X_3)+H(X_2|X_1X_3)+H(X_3|X_1X_2)$) corresponds to a decomposition of either
$C(\bs{X}^3)$ or $B(\bs{X}^3)$.}
\label{fig:diagramD3}
\end{figure*}

A partition $\pi = (\bs{\alpha}_1|\bs{\alpha}_2 |\dots| \bs{\alpha}_m) $ of the
indices $\{1,\dots,n\}$ is a collection of \textit{cells}
$\bs{\alpha}_j=\{\alpha_j^1,\dots,\alpha_j^{{l(j)}}\} $ that are disjoint and
satisfy $\bigcup_{j=1}^m\bs{\alpha}_j = \{1,\dots,n\}$. The collection of all
possible partitions of $\{1,\dots,n\}$, denoted by $\mathcal{P}_n$, has a
lattice structure~\footnote{A lattice is a partially ordered set with a unique
infimum and supremum. For more details on this construction, see
\cite{stanley2012}} enabled by the partial order introduced by the natural
refinement relationship, in which $\pi_2\succeq\pi_1$ if $\pi_2$ is
\textit{finer}~\footnote{If $\pi_1,\pi_2\in\mathcal{P}_n$ with $\pi_1=
(\bs{\alpha}_1|\dots|\bs{\alpha}_r)$ and $\pi_2 =
(\bs{\beta}_1|\dots|\bs{\beta}_s)$, $\pi_1$ is \textit{finer} than $\pi_2$ if
for each $\bs{\alpha}_i$ exists $\bs{\beta}_k$ such that $\bs{\alpha}_i \subset
\bs{\beta}_k$.} than $\pi_1$ (or, equivalently, if $\pi_1$ is \textit{coarser}
than $\pi_2$). A partition $\pi_2$ is said to \textit{cover} $\pi_1$ if
$\pi_2\succeq\pi_1$ and it is not possible to find another partition $\pi_3$ such
that $\pi_2\succeq\pi_3\succeq\pi_1$~\footnote{It is direct to see that $\pi_2$
covers $\pi_1$ if and only if it is an ``elementary refinement'', i.e. $\pi_2$
can be obtained from $\pi_1$ by dividing one cell of $\pi_1$ in two. Hence, if
$\pi_2$ covers $\pi_1$ then $|\pi_2|=|\pi_1|+1$, where $|\pi|$ is the number of
(non-empty) cells of $\pi$.}. For this partial order relationship,
$\pi_\text{source} = (12\dots n)$ is the unique infimum of $\mathcal{P}_n$, and
$\pi_\text{sink} = (1|2|\dots |n)$ is the unique supremum of $\mathcal{P}_n$.

A directed acyclic graph (DAG) $\mathcal{G}_n$ can be built, where the nodes
are the partitions in $\mathcal{P}_n$, and a directed edge exists from $\pi_1$
to $\pi_2$ if and only if $\pi_2$ covers $\pi_1$~\footnote{Put simply, there is an
edge from $\pi_1$ to $\pi_2$ if $\pi_2$ results from taking $\pi_1$ and
splitting one of its cells in two.}. A path $\texttt{p}$ in $\mathcal{G}_n$
joining two partitions $\pi_\text{a}$ and $\pi_\text{b}$ is a sequence of nodes
$\texttt{p} = (\pi_1,\dots,\pi_L)$, where $\pi_1=\pi_\text{a}$,
$\pi_L=\pi_\text{b}$, and $\pi_{i+1}$ covers $\pi_{i}$ for all
$i\in\{1,\dots,L-1\}$. The collection of all paths from $\pi_\text{a}$ to
$\pi_\text{b}$ is denoted by $\texttt{P}(\pi_\text{a},
\pi_\text{b})$~\footnote{It is direct to check that $\pi_\text{b} \succ
\pi_\text{a}$ if and only if $\texttt{P}(\pi_\text{a}, \pi_\text{b})\neq
\emptyset$. Moreover, all $\texttt{p}\in\texttt{P}(\pi_\text{a},\pi_\text{b})$
have the same length, given by $|\texttt{p}| = | |\pi_\text{b}| -
|\pi_\text{a}| | $, where $|\texttt{p}|$ is the number of edges in the path.}.
If the edge joining $\pi_1$ and $\pi_2$ has a weight $v(\pi_1,\pi_2)$
associated, then the corresponding \textit{path weight} of $\texttt{p} =
(\pi_1,\dots,\pi_L)$ is merely the summation of all edge weights along
\texttt{p}:
\begin{equation}\label{eq:def_pathweight1}
  W(\texttt{p};v) := \sum_{k=1}^{L-1} v(\pi_k,\pi_{k+1}) ~ .
\end{equation}

\subsubsection{Lattice decompositions of $C(\bs{X}^n)$ and $B(\bs{X}^n)$}

Let us build some useful weight functions over $\mathcal{G}_n$. We first assign
to each node $\pi = (\bs{\alpha}_1|\dots|\bs{\alpha}_L) \in \mathcal{P}_n$ the
value
\begin{equation}
H(\pi) \coloneqq H\big( \prod_{j=1}^L p_{ \bs{X}^{\bs{\alpha}_j}} \big) = \sum_{j=1}^L H\big( \bs{X}^{\bs{\alpha}_j} \big) \nonumber
\enspace,
\end{equation}
with $\bs{X}^{\bs{\alpha}_j}=(X_{\alpha_j^1},\dots, X_{\alpha_j^{l(j)}})$,
which corresponds to the entropy of the probability distribution $\prod_{j=1}^L
p_{ \bs{X}^{\bs{\alpha}_j}}$ that includes interdependencies within cells, but
not across cells. To each edge of $\mathcal{G}_n$ we assign a weight
\begin{equation}\label{weight_h}
v_\text{h}(\pi_1,\pi_2) \coloneqq H(\pi_2) - H(\pi_1)
\enspace.
\end{equation}
Since $H(\pi_\text{a}) \geq H(\pi_\text{b})$ if $\pi_\text{a} \succeq
\pi_\text{b}$, one can represent $\mathcal{G}_n$ under $v_\text{h}$ by placing
nodes with more cells in higher layers (see the upper half of
Figure~\ref{fig:diagramD3}).

Alternatively, let us now consider the residual entropy of $\pi =
(\bs{\alpha}_1|\dots|\bs{\alpha}_m)\in\mathcal{P}_n$, which is given by $R(\pi)
\coloneqq \sum_{k=1}^m R_{\bs{\alpha}_k}$, with
\begin{equation}\label{def:red_part}
R_{\bs{\alpha}_k} \coloneqq H(\bs{X}^{\bs{\alpha}_k}|\bs{X}^{\bs{\alpha}_1},\dots, \bs{X}^{\bs{\alpha}_{k-1}}, \bs{X}^{\bs{\alpha}_{k+1}},\dots,\bs{X}^{\bs{\alpha}_m}). \nonumber
\end{equation}
The above generalises the notion of residual entropy per individual variable
given in Section~\ref{sec:shared_randomness}~\footnote{In effect,
$R_{\bs{\alpha}_k}$ represents the portion of the entropy of the $k$-th cell
that is not shared with other cells}. With this, we introduce weights to each
edge of $\mathcal{G}_n$ based on residuals, given by
\begin{equation}\label{weight_r}
v_\text{r}(\pi_1,\pi_2) \coloneqq R(\pi_1)  - R(\pi_2)
\enspace.
\end{equation}
As residual entropy decreases when the partition is refined (see
Appendix~\ref{app:residual}), in this case one can illustrate the corresponding
DAG by placing nodes with more cells in lower positions (see lower half of
Figure~\ref{fig:diagramD3}).

Conveniently, for every edge $v_h$ and $v_r$ correspond to a mutual information
or a conditional mutual information term, respectively. This is illustrated in
the edges of Figure \ref{fig:diagramD3} and formalised in the Appendix.

The next result shows that the weights $v_\text{h}$ and $v_\text{r}$ provide
decompositions for $C(\bs{X}^n)$ and $B(\bs{X}^n)$, respectively.

\begin{lemma}\label{prop:TC_DCT}
Every path $\mathtt{p} \in \texttt{P}(\pi_\text{source},\pi_\text{sink})$
provides the following decompositions:
\begin{align}
W(\mathtt{p};v_\text{h}) &= C(\bs{X}^n) \nonumber\\
W(\mathtt{p};v_\text{r}) &= B(\bs{X}^n) \nonumber~ .
\end{align}
\end{lemma}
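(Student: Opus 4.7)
The plan is to observe that both weight functions are exact differentials of potential functions on the partition lattice, so their path integrals telescope and depend only on the endpoints. This reduces the lemma to evaluating the potentials $H(\pi)$ and $R(\pi)$ at $\pi_{\text{source}}$ and $\pi_{\text{sink}}$.

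First I would handle the $v_{\text{h}}$ case. By definition~\eqref{weight_h}, each edge weight is the difference of the node-potential $H$, so for any path $\mathtt{p}=(\pi_1,\dots,\pi_L)\in\texttt{P}(\pi_{\text{source}},\pi_{\text{sink}})$ the sum in~\eqref{eq:def_pathweight1} telescopes:
\begin{equation}
W(\mathtt{p};v_{\text{h}}) = \sum_{k=1}^{L-1}\bigl[H(\pi_{k+1})-H(\pi_k)\bigr] = H(\pi_{\text{sink}})-H(\pi_{\text{source}}). \nonumber
\end{equation}
Plugging in $H(\pi_{\text{source}})=H(\bs{X}^n)$ (one cell containing all indices) and $H(\pi_{\text{sink}})=\sum_{j=1}^n H(X_j)$ (singleton cells) recovers exactly the definition~\eqref{eq:TC} of $C(\bs{X}^n)$. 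Notice the path-independence is automatic; no case analysis over edges is needed.

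Next I would repeat the argument for $v_{\text{r}}$, which is again the difference of a node-potential (with opposite sign convention), so
\begin{equation}
W(\mathtt{p};v_{\text{r}}) = \sum_{k=1}^{L-1}\bigl[R(\pi_k)-R(\pi_{k+1})\bigr] = R(\pi_{\text{source}})-R(\pi_{\text{sink}}). \nonumber
\end{equation}
At the sink, every cell is a singleton $\{j\}$, so $R_{\{j\}}=H(X_j\mid\bs{X}_{-j}^n)=R_j$ and $R(\pi_{\text{sink}})=\sum_{j=1}^n R_j$. At the source there is a single cell $\{1,\dots,n\}$ with no other cells to condition on, giving $R(\pi_{\text{source}})=H(\bs{X}^n)$. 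Comparing with~\eqref{eq:DTC} yields $W(\mathtt{p};v_{\text{r}})=B(\bs{X}^n)$.

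The only conceptual subtlety, and the point I would state carefully, is the boundary evaluation at $\pi_{\text{source}}$: the residual $R_{\bs{\alpha}_k}$ is defined as an entropy conditioned on all other cells, so when the partition has a single cell the conditioning set is empty and the expression collapses to the unconditional joint entropy. Once this is nailed down, the proof is just telescoping and there is no real obstacle, which is presumably why the authors place the main burden of the chapter on the monotonicity of $R(\pi)$ under refinement (deferred to Appendix~\ref{app:residual}) rather than on this lemma itself.
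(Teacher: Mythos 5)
Your proof is correct and follows essentially the same route as the paper's: both arguments telescope the edge weights along the path and evaluate the node potentials $H(\pi)$ and $R(\pi)$ at $\pi_{\text{source}}$ and $\pi_{\text{sink}}$, with your explicit remark that $R(\pi_{\text{source}})=H(\bs{X}^n)$ being a welcome clarification the paper leaves implicit. Nothing is missing.
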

\begin{proof}
See Appendix~\ref{app:TC_DCT}.
\end{proof}

\begin{example}\label{ex:TC_dec}
For the case of $n=3$, there are three paths joining source and sink:
\begin{align*}
\texttt{p}_1=&\{ \texttt{(123), (1|23), (1|2|3)} \}, \\
\texttt{p}_2=&\{ \texttt{(123), (2|13), (1|2|3)} \}, \\
\texttt{p}_3=&\{ \texttt{(123), (3|12), (1|2|3)} \}.
\end{align*} 
Lemma~\ref{prop:TC_DCT} shows that
$C(\bs{X}^3)=W(\texttt{p}_i;v_\text{h})$ and
$B(\bs{X}^3)=W(\texttt{p}_i;v_\text{r})$ for $i\in\{1,2,3\}$, which provides
the following decompositions:
\begin{align*}
C(\bs{X}^3) & =I(X_i;X_j,X_k) + I(X_j;X_k) ~ , \\
B(\bs{X}^3) &= I(X_i;X_j,X_k) + I(X_j;X_k|X_i) ~ .
\end{align*}
\end{example}

\subsubsection{Lattice decomposition of $\Omega(\bs{X}^n)$}
\label{sec:omega_lattice}

Let us now leverage the decompositions presented in the previous subsection to
develop decompositions for the O-information. For this, let us first introduce
a new assignment of weights for the edges of $\mathcal{G}_n$, given by
\begin{equation}
v_\text{s}(\pi_1,\pi_2) := v_\text{h}(\pi_1,\pi_2) - v_\text{r}(\pi_1,\pi_2)
\enspace.
\end{equation}
In contrast with Eqs.~\eqref{weight_h} and \eqref{weight_r}, these weights can
attain negative values. The following key result shows that the weights $v_s$
provide a decomposition of $\Omega(\bm X^n)$.

\begin{proposition}\label{teo:1}
Every path $\mathtt{p} \in \texttt{P}(\pi_\text{source},\pi_\text{sink})$
provides the following decomposition:
\begin{equation}\label{eq:sum_interacitoninfo}
W(\mathtt{p};v_\text{s}) = \Omega(\bs{X}^n)~. 
\end{equation}
Moreover, Eq.~\eqref{eq:sum_interacitoninfo} is a sum of interaction
information terms of the form in Eq.~\eqref{eq:red-syn}.
\end{proposition}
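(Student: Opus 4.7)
The plan is to first apply linearity of the path weight to reduce the main identity to Lemma~\ref{prop:TC_DCT}, and then to inspect a single edge and identify its $v_\text{s}$-weight as a three-way interaction information.

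For the main identity: because $W(\mathtt{p};\cdot)$ defined in Eq.~\eqref{eq:def_pathweight1} is linear in the edge weight, and because $v_\text{s} = v_\text{h} - v_\text{r}$ by construction, we immediately get $W(\mathtt{p};v_\text{s}) = W(\mathtt{p};v_\text{h}) - W(\mathtt{p};v_\text{r})$. Lemma~\ref{prop:TC_DCT} then delivers $W(\mathtt{p};v_\text{s}) = C(\bs{X}^n) - B(\bs{X}^n) = \Omega(\bs{X}^n)$, the last equality being Definition~\ref{def:omega}.

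For the second claim I would look at a single edge $\pi_1 \to \pi_2$. Since $\pi_2$ covers $\pi_1$, it is obtained by splitting a single cell $\bs{\alpha}$ of $\pi_1$ into two disjoint parts $\bs{\beta}$ and $\bs{\gamma}$ (with $\bs{\alpha}=\bs{\beta}\cup\bs{\gamma}$), leaving every other cell untouched. Writing $\bar{\bs{\alpha}} = \{1,\dots,n\}\setminus\bs{\alpha}$, the definition of $v_\text{h}$ gives $v_\text{h}(\pi_1,\pi_2) = H(\bs{X}^{\bs{\beta}}) + H(\bs{X}^{\bs{\gamma}}) - H(\bs{X}^{\bs{\alpha}}) = I(\bs{X}^{\bs{\beta}};\bs{X}^{\bs{\gamma}})$. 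For $v_\text{r}$, the key observation is that $R_{\bs{\alpha}_k}$ is a conditional entropy whose conditioning set is the concatenation of all other cells of the partition, which \emph{as a collection of variables} depends only on the complementary index set $\bar{\bs{\alpha}}_k$, not on how it is broken into cells; hence the residuals of cells other than $\bs{\alpha}$ cancel between $\pi_1$ and $\pi_2$, leaving $v_\text{r}(\pi_1,\pi_2) = R_{\bs{\alpha}}^{\pi_1} - R_{\bs{\beta}}^{\pi_2} - R_{\bs{\gamma}}^{\pi_2}$. A single application of the chain rule to $H(\bs{X}^{\bs{\alpha}}\mid\bs{X}^{\bar{\bs{\alpha}}})$ collapses this to $I(\bs{X}^{\bs{\beta}};\bs{X}^{\bs{\gamma}}\mid\bs{X}^{\bar{\bs{\alpha}}})$. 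Consequently
\begin{equation*}
v_\text{s}(\pi_1,\pi_2) = I(\bs{X}^{\bs{\beta}};\bs{X}^{\bs{\gamma}}) - I(\bs{X}^{\bs{\beta}};\bs{X}^{\bs{\gamma}}\mid\bs{X}^{\bar{\bs{\alpha}}}) = I(\bs{X}^{\bs{\beta}};\bs{X}^{\bs{\gamma}};\bs{X}^{\bar{\bs{\alpha}}}),
\end{equation*}
which is precisely the three-variable interaction-information form of Eq.~\eqref{eq:red-syn} applied to the three groups $\bs{X}^{\bs{\beta}}$, $\bs{X}^{\bs{\gamma}}$, $\bs{X}^{\bar{\bs{\alpha}}}$. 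Summing over the edges of $\mathtt{p}$ yields the desired representation of $\Omega(\bs{X}^n)$ as a sum of interaction-information terms.

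The main obstacle is really just the bookkeeping for $v_\text{r}$: one must notice that the residual formula does not see the internal partition of the complement, so the ``spectator'' cells contribute zero to $v_\text{r}(\pi_1,\pi_2)$. Once that cancellation is recognised, everything else is the linearity argument above and a one-line chain-rule manipulation to fold $v_\text{h}-v_\text{r}$ into the three-way interaction information.
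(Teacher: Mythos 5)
Your proof is correct and takes essentially the same route as the paper's: the first part via linearity of the path weight combined with Lemma~\ref{prop:TC_DCT}, and the second part via an edge-by-edge computation showing that $v_\text{s}$ of a covering pair equals $I(\bs{X}^{\bs{\beta}};\bs{X}^{\bs{\gamma}})-I(\bs{X}^{\bs{\beta}};\bs{X}^{\bs{\gamma}}\mid\bs{X}^{\bar{\bs{\alpha}}})$, i.e.\ a three-way interaction information as in Eq.~\eqref{eq:red-syn}. Your explicit justification that the spectator cells' residuals cancel (because each residual's conditioning set depends only on the complementary index set, not on how it is partitioned) is a bookkeeping step the paper leaves implicit, but the argument is the same.
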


\begin{proof}
See Appendix~\ref{app:theo:1}.
\end{proof}

This finding extends property (C) of Lemma~\ref{prop:basic} by showing that the
O-information can always be expressed as a sum of interaction information terms
of three sets of variables (see Corollary~\ref{cor:ass} below for an explicit
example of this). As a consequence, the O-information inherits the capabilities
of the triple interaction information for reflecting the balance between
synergies and redundancies, and is applicable to systems of any size.

An inconventient feature of partition lattices is that they grow
super-exponentially with system size~\footnote{The number of the nodes of
$\mathcal{G}_n$ grows with the \textit{Bell numbers}, known for their
super-exponential growth rate~\cite{comtet2012advanced}. To find the number of
paths in $\mathtt{P}(\pi_\text{source},\pi_\text{sink})$, note that if one
starts from the sink and moves towards the source, every step corresponds to
merging two cells into one. Therefore, as selecting two out of $m$ cells gives
${m\choose 2}$ choices, the total number of paths is given by
\unexpanded{\begin{equation*} |\texttt{P}(\pi_\text{source}, \pi_\text{sink})|
= \sum_{m=2}^n {m \choose 2} = \frac{n!(n-1)!}{2^{n-1}} \enspace,
\end{equation*}}
which grows faster than the Bell numbers.}, and hence heuristic methods for
exploring them are necessary. A particularly interesting sub-family of
$\texttt{P}(\pi_\text{source},\pi_\text{sink})$ are the ``assembly paths,''
which have the form (up to re-labelling)
\begin{equation}\label{eq:ass}
\texttt{p}_\text{a} = \{ (12\dots n), (12\dots (n-1)|n), \dots, (1|2|\dots|n)\}.
\end{equation}
These paths can be thought of as the process of first separating $X_n$ from the
rest of the system, then $X_{n-1}$, and so on. Conversely, by considering them
backwards, one can think of these paths as first connecting $X_1$ and $X_2$,
then connecting $X_3$ to $\bm X^2$, and so on -- i.e. as assembling the system
by sequentially placing its pieces together. The following corollary of
Proposition~\ref{teo:1} presents useful decompositions of $C(\bs{X}^n)$,
$B(\bs{X}^n)$, and $\Omega(\bs{X}^n)$ in terms of assembly paths.
\begin{corollary}\label{cor:ass}
  For an assembly path as given in Eq.~\eqref{eq:ass}, the corresponding
decompositions of the total correlation, binding entropy and O-information are
\begin{align}
C(\bs{X}^n) &= \sum_{i=2}^n I(X_i;\bs{X}^{i-1})\enspace,\label{bound_TC} \\
B(\bs{X}^n) &= I(X_n;\bs{X}^{n-1}) + \sum_{j=2}^{n-1} I(X_j;\bs{X}^{j-1}|\bs{X}_{j+1}^n), \label{bound_DTC}\\
\Omega(\bs{X}^n) &= \sum_{k=2}^{n-1} I(X_k;\bs{X}^{k-1} ; \bs{X}_{k+1}^n) \label{bound_o-info}
\enspace,
\end{align}
with $\bs{X}^n_{k} = (X_k,X_{k+1},\dots,X_n)$ and $\bs{X}^k = (X_1,\dots,X_k)$.
\end{corollary}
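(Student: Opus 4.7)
The plan is to specialize Lemmas~2 and Proposition~1 to the concrete assembly path in Eq.~\eqref{eq:ass} and show that each edge weight collapses to an explicit information-theoretic term. Write the path as $\pi_0 \to \pi_1 \to \cdots \to \pi_{n-1}$, where $\pi_k = (1\dots(n-k)\,|\,n-k+1\,|\dots|\,n)$, so that the edge $\pi_{k-1}\to\pi_k$ corresponds to detaching $X_j$ (with $j=n-k+1$) from the compound cell $\{1,\dots,j\}$, while the trailing singletons $\bs{X}^n_{j+1}$ remain untouched.

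For $v_\text{h}(\pi_{k-1},\pi_k) = H(\pi_k) - H(\pi_{k-1})$, the singletons cancel and only the split cell contributes, giving $H(\bs{X}^{j-1}) + H(X_j) - H(\bs{X}^j) = I(X_j;\bs{X}^{j-1})$. Summing over $k=1,\dots,n-1$ (equivalently $j=n,\dots,2$) and invoking Lemma~2 yields Eq.~\eqref{bound_TC}. For $v_\text{r}$ the computation is one line longer: letting $B = \{j+1,\dots,n\}$, the contribution of the singletons $\bs{X}^B$ to $R(\pi_{k-1})$ and $R(\pi_k)$ is the same (since conditioning on the union $\bs{X}^{j-1}\cup\{X_j\}$ is the same as conditioning on $\bs{X}^j$), so the edge weight reduces to $H(\bs{X}^j|\bs{X}^B) - H(\bs{X}^{j-1}|X_j,\bs{X}^B) - H(X_j|\bs{X}^{j-1},\bs{X}^B)$; applying the chain rule $H(\bs{X}^j|\bs{X}^B) = H(\bs{X}^{j-1}|\bs{X}^B) + H(X_j|\bs{X}^{j-1},\bs{X}^B)$ collapses this to $I(X_j;\bs{X}^{j-1}|\bs{X}^n_{j+1})$. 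The $j=n$ edge (where $B=\emptyset$) contributes $I(X_n;\bs{X}^{n-1})$, separated off to match the form in Eq.~\eqref{bound_DTC}.

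Finally, for $\Omega$, Proposition~1 gives $\Omega(\bs{X}^n)= \sum_k [v_\text{h}(\pi_{k-1},\pi_k) - v_\text{r}(\pi_{k-1},\pi_k)]$. For each $j\in\{2,\dots,n-1\}$ the $k$-th edge contributes $I(X_j;\bs{X}^{j-1}) - I(X_j;\bs{X}^{j-1}|\bs{X}^n_{j+1})$, which is precisely the interaction information $I(X_j;\bs{X}^{j-1};\bs{X}^n_{j+1})$ in the form of Eq.~\eqref{eq:red-syn}. The $j=n$ edge contributes zero because $\bs{X}^n_{n+1} = \emptyset$ and so $v_\text{h}=v_\text{r}=I(X_n;\bs{X}^{n-1})$. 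Re-labelling the summation index gives Eq.~\eqref{bound_o-info}.

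The only step requiring real care is the computation of $v_\text{r}$: one must verify that conditioning structures align so that the residual contributions of the unaffected singleton cells cancel, and then correctly apply the chain rule. Everything else is index bookkeeping that follows automatically from Lemma~2 and Proposition~1.
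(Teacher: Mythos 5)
Your proof is correct and is exactly the intended derivation: the paper states this as an immediate corollary of Lemma~\ref{prop:TC_DCT} and Proposition~\ref{teo:1} without written proof, and the computation of $v_\text{h}$, $v_\text{r}$ and $v_\text{s}=v_\text{h}-v_\text{r}$ on each edge of the assembly path (including the cancellation of the untouched singleton residuals and the vanishing of the $j=n$ term in $\Omega$) is precisely what the authors' general edge-weight formulas in Appendix~\ref{app:theo:1} specialize to. Nothing is missing.
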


As a concluding remark, let us note that the decompositions presented by
Corollary~\ref{cor:ass} are valid for any relabeling of the indices (i.e. any
ordering of the system's variables). This property is a direct consequence of
the lattice construction developed in this subsection, which plays an important
role in the following sections.

\section{Understanding the O-information}
\label{sec:4}

By definition, $\Omega >0$ implies that the interdependencies are better
described as shared randomness, while $\Omega <0$ implies that they are better
explained as collective constraints. In this section we explore this further,
examining what the magnitude of $\Omega$ tells us about the system.

Through this section we use the shorthand notation $|\mathcal{X}| \coloneqq
\max_{j=1,\dots,n} |\mathcal{X}_j|$ for the cardinality of the largest alphabet
in $\bm X^n$.

\subsection{Characterising extreme values of $\Omega$}
\label{sec:omega_extremes}

Let us explore the range of values that the O-information can
attain. As a first step, Lemma~\ref{prop:bounds} provides bounds
for $C(\bm X^n)$, $B(\bm X^n)$, and $\Omega(\bm X^n)$. 
\begin{lemma}\label{prop:bounds}
The following bounds hold:
\begin{itemize}
  \item $(n-1) \log |\mathcal{X}| \geq C(\bs{X}^n) \geq 0$,
  \item $(n-1) \log |\mathcal{X}| \geq B(\bs{X}^n) \geq 0$,
  \item $n \log | \mathcal{X} | \geq C(\bm X^n) + B(\bm X^n) \geq 0$,
  \item $(n-2) \log |\mathcal{X}| \geq \Omega(\bs{X}^n) \geq (2-n) \log |\mathcal{X}|$.
\end{itemize}
Moreover, these bounds are tight.
\end{lemma}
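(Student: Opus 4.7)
The plan is to prove the nonnegativity bounds first from standard entropy inequalities, and then obtain the upper bounds by feeding the assembly-path decompositions of Corollary~\ref{cor:ass} into elementary mutual-information bounds. Finally, tightness will be exhibited with two canonical examples.

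For the lower bounds, $C(\bs{X}^n) \geq 0$ is just subadditivity of entropy, and $B(\bs{X}^n) \geq 0$ follows from the chain rule together with ``conditioning reduces entropy'': writing $H(\bs{X}^n) = \sum_j H(X_j \mid X_1,\dots,X_{j-1})$ and noting $H(X_j \mid X_1,\dots,X_{j-1}) \geq H(X_j \mid \bs{X}^n_{-j}) = R_j$. These two facts immediately give $C+B \geq 0$.

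For the upper bounds I would exploit the decompositions in Corollary~\ref{cor:ass}. For $C$, pick any coordinate $k$; since $H(\bs{X}^n) \geq H(X_k)$, one gets $C(\bs{X}^n) \leq \sum_{j \neq k} H(X_j) \leq (n-1)\log|\mathcal{X}|$. For $B$, use Eq.~\eqref{bound_DTC}: it is a sum of $n-1$ mutual-information terms, each of the form $I(X_j;\cdot)$ or $I(X_j;\cdot|\cdot)$, which is bounded by $H(X_j) \leq \log|\mathcal{X}|$. For $C+B$, the identity $C(\bs{X}^n)+B(\bs{X}^n) = \sum_{j=1}^n[H(X_j)-R_j] = \sum_{j=1}^n I(X_j;\bs{X}^n_{-j})$ yields the desired $n\log|\mathcal{X}|$ bound. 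For $\Omega$, use Eq.~\eqref{bound_o-info}: each interaction-information summand $I(X_k;\bs{X}^{k-1};\bs{X}^n_{k+1}) = I(X_k;\bs{X}^{k-1}) - I(X_k;\bs{X}^{k-1}\mid \bs{X}^n_{k+1})$ is the difference of two nonnegative quantities each at most $H(X_k) \leq \log|\mathcal{X}|$, so in absolute value it is bounded by $\log|\mathcal{X}|$. Summing over the $n-2$ indices $k=2,\ldots,n-1$ gives $|\Omega(\bs{X}^n)| \leq (n-2)\log|\mathcal{X}|$, which is both the upper and the symmetric lower bound.

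For tightness I would use just two systems over a binary alphabet (so $\log|\mathcal{X}|=1$). The independent-variables case trivially attains $C=B=C+B=\Omega=0$. The \emph{full-copy} system $X_1=\dots=X_n$, uniform, gives $C=n-1$, $B=1$, $\Omega=n-2$, saturating the upper bounds on $C$ and $\Omega$. The \emph{parity} system with $X_1,\dots,X_{n-1}$ i.i.d.\ uniform and $X_n=X_1\oplus\cdots\oplus X_{n-1}$ yields $H(\bs{X}^n)=n-1$, $R_j=0$, hence $C=1$, $B=n-1$, $\Omega=2-n$, saturating the upper bound on $B$ and the lower bound on $\Omega$. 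Both examples satisfy $C+B=n$, saturating that bound. The only mildly delicate point is checking that each assembly-path summand can be bounded tightly by $\log|\mathcal{X}|$; I expect this to be immediate once one uses $I(A;B)\leq H(A)$ and $I(A;B|C)\leq H(A|C)\leq H(A)$, but this is the step where one must be careful to keep $|\mathcal{X}|$ on the right-hand side rather than $|\mathcal{X}_j|$.
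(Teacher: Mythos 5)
Your proof is correct and follows essentially the same route as the paper's: both feed the assembly-path decompositions of Corollary~\ref{cor:ass} (together with Eq.~\eqref{eq:sumCB}) into the elementary bounds $0 \leq I(X_i;X_j|X_k) \leq \log|\mathcal{X}|$ and $|I(X_i;X_j;X_k)| \leq \log|\mathcal{X}|$. Your treatment of tightness is actually more explicit than the paper's one-line remark, since you exhibit the $n$-bit copy and $n$-bit \texttt{xor} directly (the paper defers these computations to Proposition~\ref{prop:iff}).
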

\begin{proof}
See Appendix~\ref{app:bounds}.
\end{proof}

Let us introduce some nomenclature. A random binary vector $\bs{X}^n$ is said
to be a ``$n$-bit copy'' if $X_1$ is a Bernoulli random variable with parameter
$p=1/2$ (i.e. a \emph{fair coin}) and $X_1=X_2=\dots=X_n$. Also, a random
binary vector $\bs{X}^n$ is said to be a ``$n$-bit \texttt{xor}'' if
$\bs{X}^{n-1}$ are i.i.d. fair coins and $X_n=\sum_{j=1}^{n-1} X_j \pmod{2}$.
Our next result shows that these two distributions attain the upper and lower
bounds of the O-information.

\begin{proposition}\label{prop:iff}
Let $\bs{X}^n$ be a binary vector with $n\geq 3$. Then,
\begin{enumerate}
\item $\Omega(\bs{X}^n) = n-2$, if and only if $\bs{X}^n$ is a $n$-bit copy.
\item $\Omega(\bs{X}^n) = 2-n$, if and only if $\bs{X}^n$ is a $n$-bit \texttt{xor}.
\end{enumerate}
\end{proposition}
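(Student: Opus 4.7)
My plan is to handle both parts of the iff in parallel. The easy ``if'' directions reduce to direct substitution: for the $n$-bit copy, $H(\bs X^n)=H(X_j)=H(\bs X_{-j}^n)=1$ gives $\Omega=(n-2)\cdot 1+n(1-1)=n-2$; for the $n$-bit \texttt{xor}, $H(\bs X^n)=n-1$ (the first $n-1$ bits are i.i.d.\ fair and determine $X_n$), $H(X_j)=1$, and $H(\bs X_{-j}^n)=n-1$ (any $n-1$ coordinates are themselves i.i.d.\ fair coins), yielding $\Omega=(n-2)(n-1)-n(n-2)=2-n$.

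For the nontrivial ``only if'' directions, the main tool is the assembly-path decomposition from Corollary~\ref{cor:ass}, $\Omega(\bs X^n)=\sum_{k=2}^{n-1}I(X_k;\bs X^{k-1};\bs X_{k+1}^n)$, which is valid under any relabeling of the indices. Writing each summand as $I(X_k;\bs X^{k-1})-I(X_k;\bs X^{k-1}\mid\bs X_{k+1}^n)$ and using $0\le I(X_k;\cdot)\le H(X_k)\le\log 2=1$ yields the tight pointwise bound $|I(X_k;\bs X^{k-1};\bs X_{k+1}^n)|\le 1$; the $n-2$ summands together recover $|\Omega|\le n-2$. Saturating $\pm(n-2)$ therefore requires every summand, and moreover every summand under every relabeling of the indices, to equal $\pm 1$.

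In the copy case, saturation at $+1$ forces $H(X_k)=1$ and $H(X_k\mid\bs X^{k-1})=0$ for $k=2,\ldots,n-1$, so inductively each such $X_k$ is a Boolean bijection of $X_1$, hence $X_k\in\{X_1,1\oplus X_1\}$; rerunning the decomposition with an ordering that places $X_n$ in a middle slot handles the last coordinate. Up to coordinate-wise alphabet flips, which leave all entropies and hence $\Omega$ invariant, this is the $n$-bit copy.

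In the \texttt{xor} case, saturation at $-1$ forces $I(X_k;\bs X^{k-1})=0$, $H(X_k)=1$, and $H(X_k\mid\bs X_{-k}^n)=0$. Varying the ordering so that every $X_j$ occupies the last slot shows that each $\bs X_{-j}^n$ is a vector of $n-1$ mutually independent fair coins and that every $X_j$ is determined by the others, so $H(\bs X^n)=n-1$ and the joint distribution is uniform on a set $S\subset\{0,1\}^n$ of size $2^{n-1}$ whose projection onto any $n-1$ coordinates is a bijection. The concluding combinatorial step, which I expect to be the main obstacle, is to identify $S$ with the affine hyperplane $\{\bs x:x_1\oplus\cdots\oplus x_n=b\}$ for some $b\in\{0,1\}$: bijectivity of every projection implies that the function $f(x_1,\ldots,x_{n-1})=x_n$ depends non-trivially on each argument, so over $\{0,1\}$ it acts as $x_i\oplus h_i(\cdot)$ in each variable; chaining these representations via Boolean derivatives forces $f$ to be affine with all coefficients equal to $1$. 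Up to flipping a single coordinate, $\bs X^n$ is then the $n$-bit \texttt{xor}.
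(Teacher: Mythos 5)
Your proof is correct and follows essentially the same route as the paper's: direct substitution for the ``if'' directions, and for the converses the assembly-path decomposition of Corollary~\ref{cor:ass} with every interaction-information summand forced to saturate its bound $\pm 1$ under all relabelings. Your explicit Boolean-derivative argument identifying the support with an affine hyperplane, and your remark that the conclusion holds only up to coordinate-wise bit flips, are in fact more careful than the paper's own treatment of those final steps.
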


\begin{proof}
See Appendix~\ref{app:iff}. 
\end{proof}

\begin{corollary}
The same proof can be used to confirm that for variables with
$|\mathcal{X}_1|=\dots = |\mathcal{X}_n| =m$, the maximum $\Omega(\bs{X}^n) =
(n-2)\log m$ is attained by variables which are a copy of each other, while the
minimum $\Omega(\bs{X}^n) = (2-n)\log m$ corresponds to when $\bs{X}^{n-1}$ are
independent and uniformly distributed and $X_n=\sum_{j=1}^{n-1} X_j \pmod{m}$.
\end{corollary}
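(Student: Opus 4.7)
The plan is to leverage the proof of Proposition~\ref{prop:iff} directly, replacing the binary alphabet with an $m$-ary one and tracking the factor $\log m$ throughout. Both the bounds of Lemma~\ref{prop:bounds} and the combinatorial arguments of the binary case are alphabet-agnostic in their structure: each ``1'' in the binary case corresponds to $\log 2$ and becomes $\log m$ in the general case.

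First I would verify that the two candidate distributions attain the claimed extreme values by direct substitution into Definition~\ref{def:omega}. For the $m$-ary copy with $X_1=\dots=X_n$ uniform on an alphabet of size $m$, every marginal entropy, every leave-one-out entropy, and the joint entropy equal $\log m$, yielding
\[
\Omega(\bs{X}^n) = (n-2)\log m + \sum_{j=1}^n \bigl[\log m - \log m\bigr] = (n-2)\log m.
\]
For the $m$-ary parity with $\bs{X}^{n-1}$ i.i.d.\ uniform and $X_n=\sum_{j=1}^{n-1}X_j \pmod{m}$: the joint entropy is $(n-1)\log m$ because $X_n$ is a deterministic function of the others; each $H(X_j) = \log m$; and each leave-one-out marginal $\bs{X}^n_{-j}$ still has entropy $(n-1)\log m$, since the map $(X_1,\dots,X_{j-1},X_{j+1},\dots,X_{n-1},X_n) \leftrightarrow (X_1,\dots,X_{j-1},X_{j+1},\dots,X_{n-1},X_j)$ is a bijection modulo $m$. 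Substituting yields
\[
\Omega(\bs{X}^n) = (n-2)(n-1)\log m + n\bigl[\log m - (n-1)\log m\bigr] = (2-n)\log m.
\]

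Second, for the converse (uniqueness up to relabeling of alphabet symbols), I would mirror step-by-step the argument used to establish Proposition~\ref{prop:iff}. Saturating the upper bound in Lemma~\ref{prop:bounds} forces each $H(X_j) = \log m$ (so each $X_j$ is uniform) and $H(\bs{X}^n) = \log m$, which together imply that each $X_j$ is a deterministic function of every other, hence all $X_j$ coincide after a possible relabeling of the symbols in each alphabet. Saturating the lower bound forces $H(X_j) = \log m$, $H(\bs{X}^n_{-j}) = (n-1)\log m$ for all $j$, and $H(\bs{X}^n) = (n-1)\log m$. These jointly say that every projection of the support onto $n-1$ coordinates is the full cube $\mathcal{X}^{n-1}$ equipped with the uniform measure, while the full support has cardinality $m^{n-1}$.

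The main obstacle is extracting the modular-sum structure from these equality conditions: in the binary case the parity function is the only nontrivial option, whereas for $m>2$ one must rule out other $m$-ary functions with the same leave-one-out uniformity property. The key observation is that the support described above is a Latin-hypercube-like subset of $\mathcal{X}^n$ of size $m^{n-1}$ whose projection on any $n-1$ coordinates is a bijection onto $\mathcal{X}^{n-1}$; after composing each coordinate with a suitable bijection of $\mathcal{X}_j$ (i.e., a relabeling), such a subset must coincide with the kernel of the homomorphism $(x_1,\dots,x_n)\mapsto \sum_j x_j \bmod m$, so that $X_n = \sum_{j=1}^{n-1}X_j \pmod{m}$ up to reordering. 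This is the only ingredient beyond the direct entropy computations, and it is where the binary proof must be adapted with a short group-theoretic argument rather than reused verbatim.
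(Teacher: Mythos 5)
Your achievability computations are correct: substituting the $m$-ary copy and the $m$-ary modular sum into Definition~\ref{def:omega} gives $(n-2)\log m$ and $(2-n)\log m$ respectively, and this is a perfectly good alternative to the paper's implicit route of recomputing $C$ and $B$ via Eqs.~\eqref{bound_TC}--\eqref{bound_DTC}. Two issues arise in your converse. The smaller one: the saturation conditions you assert ($H(X_j)=\log m$ for all $j$ and $H(\bs{X}^n)=\log m$ at the maximum, and the corresponding conditions at the minimum) do not follow from Lemma~\ref{prop:bounds} alone; you must saturate the assembly-path decomposition of Eq.~\eqref{bound_o-info} term by term, forcing each of the $n-2$ interaction-information summands to equal $\pm\log m$, exactly as in Appendix~\ref{app:iff}. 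Only then do the entropy equalities (and hence the Latin-hypercube structure of the support) drop out. That step is repairable.

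The genuine gap is your final classification claim for the minimum. A support $S\subseteq\mathcal{X}^n$ of cardinality $m^{n-1}$ that projects bijectively onto every choice of $n-1$ coordinates is the graph of an $(n-1)$-ary quasigroup (for $n=3$, a Latin square), and for $m\geq 4$ such sets are \emph{not} all reducible by per-coordinate relabelings to the kernel of $(x_1,\dots,x_n)\mapsto\sum_j x_j \bmod m$. Concretely, take $n=3$, $m=4$, identify $\mathcal{X}$ with $\mathbb{Z}_2\times\mathbb{Z}_2$, let $X_1,X_2$ be i.i.d.\ uniform and $X_3=X_1\oplus X_2$ componentwise: then $\Omega(\bs{X}^3)=I(X_1;X_2)-I(X_1;X_2|X_3)=0-\log 4=(2-3)\log 4$, yet no relabeling turns this into addition modulo $4$ because $\mathbb{Z}_2\times\mathbb{Z}_2\not\cong\mathbb{Z}_4$; for $m\geq 5$ there exist Latin squares isotopic to no group at all. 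So the ``short group-theoretic argument'' you invoke does not exist, and the strong uniqueness of the minimiser is in fact false for $m\geq 4$. The corollary should be read as an achievability statement (its wording ``is attained by'' / ``corresponds to'', and the instruction to reuse the proof of Proposition~\ref{prop:iff}, whose last step for the lower bound is genuinely binary-specific, both point this way): your first paragraph already proves that part. The converse for the maximum does survive for general $m$ (mutually determining uniform variables are copies of one another up to relabeling), but the converse for the minimum can only be weakened to ``any $n-1$ variables are i.i.d.\ uniform and the remaining one is a quasigroup operation of the others,'' of which the cyclic modular sum is one instance.
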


Proposition~\ref{prop:iff} points out an important difference betwen the
O-information and the interaction information: if $\bm X^n$ is an $n$-bit
\texttt{xor} then $\Omega(\bm X^n) = 2-n$ is consistently negative and
decreasing with $n$, while $I(X_1;\dots;X_n) = (-1)^{n+1}$ oddly oscillates
between $-1$ and $+1$. This result also points out the convenience of merging
$C(\bm X^n)$ and $B(\bm X^n)$ into $\Omega(\bm X^n)$, as only the latter has
the $n$-bit copy and the $n$-bit \texttt{xor} as unique extremes.

Finally, note that $\Omega$ is continuous over small changes in $p_{\bs{X}^n}$,
as it can be expressed as a linear combination of Shannon entropies (see
Definition~\ref{def:omega}). Therefore, Proposition~\ref{prop:iff} guarantees
that distributions that are similar to a $n$-bit copy have a positive
O-information, while distributions close to a $n$-bit \texttt{xor} have
negative O-information.

\subsection{Statistical structures across scales}\label{sec:scales}

In this section we study how the O-information is related to statistical
structures of subsets of $\bm X^n$ -- i.e. structures at different scales of
the system. For simplicity, we assume in this subsection that $|\mathcal{X}|$
is finite.

In the next proposition we present some fundamental restrictions between the
total correlation of subsystems and the value of $\Omega(\bm X^n)$.

\begin{proposition}\label{theo:bounds}
If $\Omega(\bs{X}^n)\geq 0$, then for all $m\in [n-1]$
\begin{equation}
\min_{|\bs{\gamma}|=m}C(\bs{X}^{\bs{\gamma}})
\geq
\Omega(\bs{X}^n) - (n-m-1)\log|\mathcal{X}| ~ .
\label{eq:11}
\end{equation}
If $\Omega(\bs{X}^n)\leq 0$, then for all $m\in [n-1]$
\begin{equation}
\max_{|\bs{\gamma}|=m}C(\bs{X}^{\bs{\gamma}})
\leq
\Omega(\bs{X}^n) +(n-2)\log|\mathcal{X}| ~ .
\label{eq:12}
\end{equation}
Both bounds are tight if $|\Omega| \geq (n-m+1) \log|\mathcal{X}|$.
\end{proposition}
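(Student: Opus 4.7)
The plan is to fix an arbitrary $\bs{\gamma}\subseteq\{1,\ldots,n\}$ with $|\bs{\gamma}|=m$ and, by the labeling-invariance noted at the end of Section~\ref{sec:omega_lattice}, relabel variables so that $\bs{\gamma}=\{1,\ldots,m\}$. Then Corollary~\ref{cor:ass} applied under this ordering gives simultaneously
\[
\Omega(\bs{X}^n) = \sum_{k=2}^{n-1} I(X_k;\bs{X}^{k-1};\bs{X}^n_{k+1}) \quad \text{and} \quad C(\bs{X}^{\bs{\gamma}}) = \sum_{k=2}^{m} I(X_k;\bs{X}^{k-1}).
\]
The key observation is that the $\Omega$ sum splits at $k=m$ into an ``inner'' range ($2\leq k\leq m$), whose terms are comparable to those of $C(\bs{X}^{\bs{\gamma}})$, and an ``outer'' range ($m+1\leq k\leq n-1$) of length $n-m-1$ that involves variables outside $\bs{\gamma}$.

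For the first bound I would use the telescoping identity $I(X_k;\bs{X}^{k-1};\bs{X}^n_{k+1}) = I(X_k;\bs{X}^{k-1}) - I(X_k;\bs{X}^{k-1}|\bs{X}^n_{k+1})$ to get the term-wise inequality $I(X_k;\bs{X}^{k-1};\bs{X}^n_{k+1}) \leq I(X_k;\bs{X}^{k-1}) \leq \log|\mathcal{X}|$. Applied to the inner range this reproduces $C(\bs{X}^{\bs{\gamma}})$, while applied to the outer range it contributes at most $(n-m-1)\log|\mathcal{X}|$, establishing $\Omega(\bs{X}^n) \leq C(\bs{X}^{\bs{\gamma}}) + (n-m-1)\log|\mathcal{X}|$. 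Taking the minimum over $\bs{\gamma}$ yields Eq.~\eqref{eq:11}.

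For the second bound I would compute
\[
C(\bs{X}^{\bs{\gamma}}) - \Omega(\bs{X}^n) = \sum_{k=2}^{m} I(X_k;\bs{X}^{k-1}|\bs{X}^n_{k+1}) - \sum_{k=m+1}^{n-1} I(X_k;\bs{X}^{k-1};\bs{X}^n_{k+1})
\]
using the same telescoping identity in the inner range. The first sum has $m-1$ terms, each a nonnegative conditional mutual information bounded by $\log|\mathcal{X}|$. For the outer range, rewriting $-I(X_k;\bs{X}^{k-1};\bs{X}^n_{k+1}) = I(X_k;\bs{X}^{k-1}|\bs{X}^n_{k+1}) - I(X_k;\bs{X}^{k-1}) \leq \log|\mathcal{X}|$ gives $n-m-1$ more terms each bounded by $\log|\mathcal{X}|$. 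The total is therefore at most $(m-1)+(n-m-1) = n-2$ copies of $\log|\mathcal{X}|$, which is Eq.~\eqref{eq:12} after maximizing over $\bs{\gamma}$.

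The main obstacle lies in combining the triple-interaction terms so that their constants add to exactly $n-2$ rather than a looser value; this is resolved by using the telescoping identity in complementary directions on the two sides of the $k=m$ split, exploiting that any dropped term is a nonnegative (conditional) mutual information upper-bounded by $\log|\mathcal{X}|$. Finally, tightness under $|\Omega|\geq (n-m+1)\log|\mathcal{X}|$ is witnessed by the extremal distributions from Proposition~\ref{prop:iff}: the $n$-bit copy saturates Eq.~\eqref{eq:11} when $\Omega>0$, and the $n$-bit \texttt{xor} saturates Eq.~\eqref{eq:12} when $\Omega<0$, for which all intermediate inequalities in the argument collapse to equalities.
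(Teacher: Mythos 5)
Your proof is correct and takes essentially the same route as the paper's: both rest on the assembly-path decompositions of Corollary~\ref{cor:ass}, split the sum defining $\Omega$ at the chosen subset of size $m$, and bound each residual (conditional) mutual information or interaction-information term by $\pm\log|\mathcal{X}|$ (the paper merely packages the inner-range step as $\Omega \leq C(\bs{X}^{n-1})$ plus an auxiliary chain-rule lemma for $C$, and the outer-range step as a bound on $B(\bs{X}^m|\bs{X}^n_{m+1})$). Your tightness remark, which only exhibits saturation at the extremal copy/\texttt{xor} distributions rather than across the whole range $|\Omega|\geq(n-m+1)\log|\mathcal{X}|$, is no less complete than the paper's own treatment, which only identifies when the bounds become non-trivial.
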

\begin{proof}
See Appendix~\ref{app:bounds}.
\end{proof}

\begin{corollary}\label{cor:x}
The following bounds hold for all $\bs{\gamma}\in\{1,\dots,n\}$ with $|\bs{\gamma}|=m$:
\begin{align}
\min&\left\{m-1, \frac{\Omega(\bs{X}^n)}{\log|\mathcal{X}|} + (n-2) \right\} 
\geq \frac{C(\bs{X}^{\bs{\gamma}})}{\log|\mathcal{X}|} \nonumber\\
&\geq \max\left\{ 0, \frac{\Omega(\bs{X}^n)}{\log|\mathcal{X}|} - (n-m-1) \right\}. \nonumber
\end{align}
\end{corollary}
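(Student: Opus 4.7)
The overall strategy is that Corollary~\ref{cor:x} is nearly immediate from stitching together Lemma~\ref{prop:bounds} and Proposition~\ref{theo:bounds}: Lemma~\ref{prop:bounds} gives unconditional bounds on $C$ applicable to any system, Proposition~\ref{theo:bounds} gives $\Omega$-dependent bounds, and the $\min/\max$ in the statement simply packages the better of the two. The argument thus reduces to two sign-based case analyses on $\Omega(\bs{X}^n)$.

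For the lower bound, applying Lemma~\ref{prop:bounds} to the $m$-variable subsystem $\bs{X}^{\bs{\gamma}}$ immediately yields the trivial bound $C(\bs{X}^{\bs{\gamma}}) \geq 0$. When $\Omega(\bs{X}^n) \geq 0$, the inequality $C(\bs{X}^{\bs{\gamma}}) \geq \min_{|\bs{\gamma}'|=m} C(\bs{X}^{\bs{\gamma}'})$ combined with Proposition~\ref{theo:bounds}, eq.~\eqref{eq:11}, provides the alternative lower bound $C(\bs{X}^{\bs{\gamma}})/\log|\mathcal{X}| \geq \Omega(\bs{X}^n)/\log|\mathcal{X}| - (n-m-1)$. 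When $\Omega(\bs{X}^n) < 0$, the expression $\Omega(\bs{X}^n)/\log|\mathcal{X}| - (n-m-1)$ is strictly negative (since $m\leq n-1$ forces the second summand to be non-positive), so the trivial bound $0$ dominates. Taking the maximum over the two bounds reproduces the claimed lower bound uniformly over the sign of $\Omega$.

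Symmetrically for the upper bound: Lemma~\ref{prop:bounds} applied to $\bs{X}^{\bs{\gamma}}$ gives $C(\bs{X}^{\bs{\gamma}}) \leq (m-1)\log|\mathcal{X}|$ (using that the subsystem's largest alphabet is at most $|\mathcal{X}|$). When $\Omega(\bs{X}^n) \leq 0$, Proposition~\ref{theo:bounds}, eq.~\eqref{eq:12}, yields the competing bound $C(\bs{X}^{\bs{\gamma}})/\log|\mathcal{X}| \leq \Omega(\bs{X}^n)/\log|\mathcal{X}| + (n-2)$. When $\Omega(\bs{X}^n) > 0$, this expression strictly exceeds $n-2 \geq m-1$, so the trivial bound $m-1$ is tighter. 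The stated $\min$ then selects the valid case in both regimes.

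No serious obstacle is anticipated; the argument is essentially algebraic case-checking. The only subtlety to be careful about is verifying the signs of the two $\Omega$-dependent expressions in the regime where the corresponding bound from Proposition~\ref{theo:bounds} does not formally apply, so that the outer $\min/\max$ really collapses to the applicable bound. Since the hypothesis $m\in[n-1]$ ensures $1\leq m\leq n-1$, both checks are elementary.
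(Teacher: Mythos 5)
Your proof is correct and follows exactly the route the paper intends: the corollary is obtained by combining the unconditional bounds $0 \le C(\bs{X}^{\bs{\gamma}}) \le (m-1)\log|\mathcal{X}|$ from Lemma~\ref{prop:bounds} with the $\Omega$-dependent bounds of Proposition~\ref{theo:bounds}, and your sign-based case analysis correctly verifies that the $\min/\max$ collapses to the applicable bound in each regime. (Incidentally, the appendix proof of Proposition~\ref{theo:bounds} never actually uses the sign hypotheses, so both $\Omega$-dependent inequalities hold unconditionally and the case analysis could be skipped, but deriving the corollary from the proposition \emph{as stated} the way you do is equally valid.)
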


Corollary~\ref{cor:x} shows that positive values of $\Omega$ constrain
subgroups to be correlated: if $\Omega(\bs{X}^n) \geq (n-m-1)\log|\mathcal{X}|$
then all groups of $m$ or more variables must have some statistical dependency.
Negative values of $\Omega$, on the other hand, impose limits on the allowed
correlation strength: if $\Omega(\bs{X}^n) \leq -(n-m-1)\log|\mathcal{X}|$ then
the correlation of all groups of $m$ or more variables is upper-bounded. As an
example, for $|\mathcal{X}|=2$ and $m=2$ the bounds given in
Corollary~\ref{cor:x} are
\begin{align*}
\max&\left\{1, \Omega(\bs{X}^n) + n-2 \right\} 
\geq I(X_i;X_j) \\
&\geq \min\left\{ 0, \Omega(\bs{X}^n) - (n-3) \right\} ~,
\end{align*}
for all $i,j\in\{1,\dots,n\}$, which shows that the bounds related to $\Omega$
are only active when $n-3 \leq |\Omega| \leq n-2$.

In conclusion, the sign of $\Omega$ determines whether the constraint is a
lower or upper bound, and $|\Omega|$ determines which scales of the system are
affected, with smaller groups being harder to constrain -- i.e. requiring
higher absolute values of $\Omega$. The relationship between the system's
scales and the values of $\Omega$ is illustrated in Figure~\ref{fig:bounds1}.

\begin{figure}[ht]
  \centering
  \includegraphics{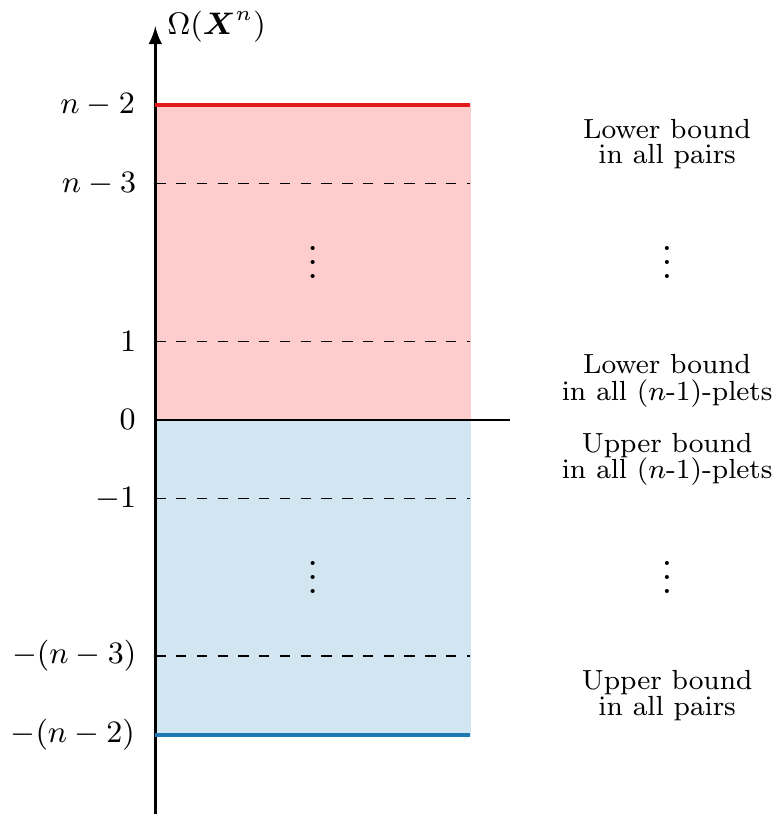}

  \caption{Values of the O-information impose limits on the
    strength of interactions -- as measured by $C(\bs{X}^{\bs{\gamma}})$ -- at
    different scales.  Positive (negative) values of $\Omega$ put lower (upper)
    bounds on subsets of $\bm X^n$, and higher absolute values of $\Omega$ put
  bounds on subsystems of smaller sizes.}

\label{fig:bounds1}
\end{figure}

The next result corresponds to the converse of Corollary~\ref{cor:x}, and shows
how interactions at different scales limit the achievable values of $\Omega$.

\begin{corollary}\label{cor:y}
For a given $\bs{\gamma}\subset\{1,\dots,n\}$ with $|\bs{\gamma}|=m$, the
following bounds on $\Omega$ hold:
\begin{align}
n - m -1   + \frac{ C(\bs{X}^{\bs{\gamma}}) }{ \log |\mathcal{X}| } 
\geq 
\frac{ \Omega(\bs{X}^n) }{ \log |\mathcal{X} |} 
\geq -(n-2) + \frac{ C(\bs{X}^{\bs{\gamma}}) }{ \log |\mathcal{X}|}. \nonumber
\end{align}
\end{corollary}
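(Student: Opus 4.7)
The plan is to obtain Corollary~\ref{cor:y} as a direct rearrangement of Proposition~\ref{theo:bounds}, handling the two sign regimes of $\Omega(\bs{X}^n)$ separately and patching the other regime with the trivial bound on $C(\bs{X}^{\bs{\gamma}})$ from Lemma~\ref{prop:bounds}.

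For the upper bound $\frac{\Omega(\bs{X}^n)}{\log|\mathcal{X}|} \leq n-m-1 + \frac{C(\bs{X}^{\bs{\gamma}})}{\log|\mathcal{X}|}$, I would first consider the case $\Omega(\bs{X}^n) \geq 0$. Proposition~\ref{theo:bounds} gives $\min_{|\bs{\gamma}'|=m} C(\bs{X}^{\bs{\gamma}'}) \geq \Omega(\bs{X}^n) - (n-m-1)\log|\mathcal{X}|$, and since the particular $\bs{\gamma}$ in the statement satisfies $C(\bs{X}^{\bs{\gamma}}) \geq \min_{|\bs{\gamma}'|=m} C(\bs{X}^{\bs{\gamma}'})$, the desired inequality follows after dividing by $\log|\mathcal{X}|$. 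For $\Omega(\bs{X}^n)\leq 0$, the right-hand side is nonnegative (since $m\leq n-1$ and $C(\bs{X}^{\bs{\gamma}})\geq 0$ by Lemma~\ref{prop:bounds}), hence the inequality holds trivially.

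For the lower bound $\frac{\Omega(\bs{X}^n)}{\log|\mathcal{X}|} \geq -(n-2) + \frac{C(\bs{X}^{\bs{\gamma}})}{\log|\mathcal{X}|}$, I would do the symmetric case split. If $\Omega(\bs{X}^n)\leq 0$, Proposition~\ref{theo:bounds} yields $C(\bs{X}^{\bs{\gamma}}) \leq \max_{|\bs{\gamma}'|=m} C(\bs{X}^{\bs{\gamma}'}) \leq \Omega(\bs{X}^n) + (n-2)\log|\mathcal{X}|$, which rearranges to the claim. If instead $\Omega(\bs{X}^n) \geq 0$, Lemma~\ref{prop:bounds} applied to the subsystem $\bs{X}^{\bs{\gamma}}$ gives $C(\bs{X}^{\bs{\gamma}}) \leq (m-1)\log|\mathcal{X}| \leq (n-2)\log|\mathcal{X}|$, so the right-hand side is nonpositive and the bound again holds trivially.

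No serious obstacle is expected: the content is entirely in Proposition~\ref{theo:bounds}, and the only care needed is to verify that the ``wrong sign'' case is absorbed by the elementary bound $0\leq C(\bs{X}^{\bs{\gamma}})\leq (m-1)\log|\mathcal{X}|$ together with $m\leq n-1$. The one small check worth spelling out is that $(m-1) - (n-2) \leq 0$ whenever $m\leq n-1$, which secures the trivial branch of the lower bound.
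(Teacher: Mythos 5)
Your derivation is correct and follows the paper's intended route: Corollary~\ref{cor:y} is obtained by rearranging the two inequalities of Proposition~\ref{theo:bounds}, and your case-split on the sign of $\Omega(\bs{X}^n)$, patched with $0 \leq C(\bs{X}^{\bs{\gamma}}) \leq (m-1)\log|\mathcal{X}|$ and $m \leq n-1$, correctly closes the only gap left by the sign hypotheses in the Proposition's statement. Worth noting: the appendix proof of Proposition~\ref{theo:bounds} actually establishes $\Omega(\bs{X}^n) \leq C(\bs{X}^{\bs{\gamma}}) + (n-m-1)\log|\mathcal{X}|$ and $\Omega(\bs{X}^n) \geq C(\bs{X}^{\bs{\gamma}}) - (n-2)\log|\mathcal{X}|$ without ever using those sign hypotheses, so the corollary is an immediate rearrangement and your case analysis, while careful and harmless, is not strictly needed.
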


By comparing it with Lemma~\ref{prop:bounds}, this result shows that a large
$C(\bs{X}^{\bs{\gamma}})$ does not allow $\Omega$ to reach its lower bound. On
the other hand, small values of $C(\bs{X}^{\bs{\gamma}})$ decrease the upper
bound, forbidding high values of $\Omega$. Additionally, note that fixing
the value of only one subset of $m$ variables reduces the range of values of
$\Omega$ from $2(n-2)$ to $2(n-2) - (m-1)$. The following example illustates
these findings.

\begin{example}
Let us consider a system $\bs{X}^n$ of binary variables, two of which are
related by the marginal distribution
\begin{equation*}
p_{X_1 X_2}(x_1, x_2) = \frac{(1-\eta)^{1 - |x_1-x_2|} \eta^{|x_1-x_2|} } {2} ~ .
\end{equation*}
That is, $X_1$ and $X_2$ are fair coins linked by a binary symmetric channel
with crossover probability $\eta$~\cite[Sec. 7]{cover2012elements}. Hence,
$C(\bs{X}^2) = I(X_1;X_2) = 1 - H(\eta)$, with $H(\eta)= -\eta \log \eta -
(1-\eta) \log (1-\eta)$ being the binary entropy function. By considering
$m=2$, Corollary~\ref{cor:y} states that
\begin{equation}
n-2 - H(\eta) \geq \Omega(\bs{X}^n) \geq -\big (n-3+H(\eta)\big) ~ , \nonumber
\end{equation}
which is illustrated in Figure~\ref{OmegaBSC}. Moreover, using
Eq.~\eqref{bound_o-info} one can verify that the upper bound (solid red line)
is attained when $X_2=X_3=\dots=X_n$, while the lower bound (solid blue line)
is attained when $X_3,\dots,X_{n-1}$ are independent fair coins and
$X_n=\sum_{j=1}^{n-1} X_j \pmod{2}$ \footnote{Interestingly, despite the
correlation between $X_1$ and $X_2$, an $n$-bit \texttt{xor} still enables the
most synergistic configuration attainable.}.
\begin{figure}[ht]
  \centering
   \includegraphics{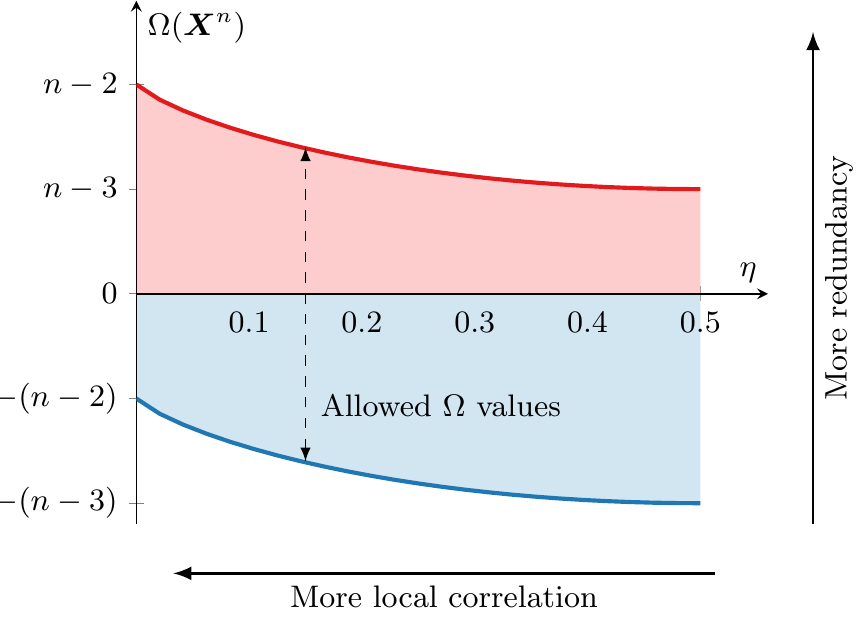}
  \caption{Bounds of the O-information when two variables are connected via a
  binary symmetric channel with crossover probability $\eta$.}
  \label{OmegaBSC}
\end{figure}
\end{example}

\subsection{$\Omega$ as a superposition of tendencies}

This subsection explores sufficient conditions that make a system have a small
O-information. As a preliminary step, the next result shows that $\Omega$ is
additive for systems with independent subsystems.

\begin{lemma}\label{prop:indepo}
If $p_{\bs{X}^n}(\bs{x}^n) = \prod_{k=1}^m p_{\bs{X}^{\bs{\alpha}_k}} (\bs{x}^{\bs{\alpha}_k})$ for some partition  $\pi=(\bs{\alpha}_1|\dots|\bs{\alpha}_m)$, then 
\begin{equation}
\Omega(\bs{X}^n) = \sum_{k=1}^m \Omega(\bs{X}^{\bs{\alpha}_k})~.\nonumber
\end{equation}
\end{lemma}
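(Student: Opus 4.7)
The plan is to decompose $\Omega = C - B$ as in Definition~\ref{def:omega} and prove additivity of the total correlation $C$ and the binding entropy $B$ separately. Under the hypothesis, joint entropy factorises: $H(\bs{X}^n) = \sum_{k=1}^m H(\bs{X}^{\bs{\alpha}_k})$. The marginal $p_{X_j}$ is unaffected, so $H(X_j)$ is unchanged whether computed in the full system or inside its cell.

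For $C$, I would split $\sum_{j=1}^n H(X_j) = \sum_{k=1}^m \sum_{j \in \bs{\alpha}_k} H(X_j)$, combine with the entropy factorisation, and read off
\begin{equation*}
C(\bs{X}^n) = \sum_{k=1}^m \Bigl[\sum_{j \in \bs{\alpha}_k} H(X_j) - H(\bs{X}^{\bs{\alpha}_k})\Bigr] = \sum_{k=1}^m C(\bs{X}^{\bs{\alpha}_k}).
\end{equation*}
For $B$, the key observation is that for $j \in \bs{\alpha}_k$, the independence of the cells implies $X_j$ is conditionally independent of $\bs{X}^{\bs{\alpha}_l}$ (for $l\neq k$) given $\bs{X}^{\bs{\alpha}_k}_{-j}$. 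Hence the residual entropy collapses to its within-cell version, $R_j = H(X_j\mid \bs{X}^n_{-j}) = H(X_j\mid \bs{X}^{\bs{\alpha}_k}_{-j})$. Summing, $\sum_{j=1}^n R_j = \sum_k \sum_{j\in\bs{\alpha}_k} R_j^{(k)}$, where $R_j^{(k)}$ is the residual inside the $k$-th subsystem. Combined with entropy additivity this yields $B(\bs{X}^n) = \sum_k B(\bs{X}^{\bs{\alpha}_k})$. Subtracting the two identities gives the claim.

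As a sanity check, I would also verify the result directly from the alternative expression $\Omega(\bs{X}^n) = (n-2)H(\bs{X}^n) + \sum_j [H(X_j) - H(\bs{X}^n_{-j})]$ in Definition~\ref{def:omega}. The only nontrivial term is $H(\bs{X}^n_{-j})$: for $j\in\bs{\alpha}_k$, independence gives $H(\bs{X}^n_{-j}) = H(\bs{X}^{\bs{\alpha}_k}_{-j}) + \sum_{l\neq k} H(\bs{X}^{\bs{\alpha}_l})$. Writing $n_k = |\bs{\alpha}_k|$ and re-indexing $\sum_k n_k \sum_{l\neq k} H(\bs{X}^{\bs{\alpha}_l}) = \sum_l (n-n_l)H(\bs{X}^{\bs{\alpha}_l})$, the cross-cell contributions telescope with $(n-2)H(\bs{X}^n)$ to leave precisely $\sum_k (n_k-2) H(\bs{X}^{\bs{\alpha}_k})$, recovering $\sum_k \Omega(\bs{X}^{\bs{\alpha}_k})$.

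There is no real obstacle here; the only point that requires care is justifying the collapse $H(X_j\mid\bs{X}^n_{-j}) = H(X_j\mid \bs{X}^{\bs{\alpha}_k}_{-j})$, which is a direct consequence of the product form of $p_{\bs{X}^n}$ together with the chain rule for entropy. I would state this as a one-line lemma before invoking it, since it is what drives the additivity of $B$, the less obvious of the two ingredients.
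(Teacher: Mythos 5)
Your proof is correct and follows essentially the same route as the paper: establish additivity of $C(\bs{X}^n)$ and $B(\bs{X}^n)$ separately under the product hypothesis and subtract. The only cosmetic difference is that the paper verifies the additivity via the assembly-path decompositions of Corollary~\ref{cor:ass} and induction on the number of cells, whereas you work directly from the definitions (using the collapse $H(X_j\mid\bs{X}^n_{-j}) = H(X_j\mid \bs{X}^{\bs{\alpha}_k}_{-j})$, which you correctly identify and justify as the key step for $B$) and handle all $m$ cells at once.
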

\begin{proof}
Let us consider the case $\pi=(\bs{\alpha}_1, \bs{\alpha}_2)$, as the
general case is then guaranteed by induction. Using Eqs.~\eqref{bound_TC} and
\eqref{bound_DTC} it is direct to check that, due to the independence,
$C(\bs{X}^n) = C(\bs{X}^{\bs{\alpha}_1}) + C(\bs{X}^{\bs{\alpha}_2})$ and
$B(\bs{X}^n) = B(\bs{X}^{\bs{\alpha}_1}) + B(\bs{X}^{\bs{\alpha}_2})$. Then,
the desired result follows from the fact that $\Omega(\bm X^n)=C(\bm X^n) - B(\bm
X^n)$.
\end{proof}

\begin{corollary}\label{cor:pairwise}
$\Omega(\bs{X}^n) = 0$ for all systems whose joint distribution can be
factorised as
\begin{equation}
p_{\bs{X}^n}(\bs{x}^n) = \prod_{k=1}^{n/2} p_{X_{2k-1} X_{2k}} (x_{2k-1},x_{2k})~.\label{eq:qerty}%
\end{equation}
\end{corollary}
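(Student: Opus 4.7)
The plan is to apply Lemma~\ref{prop:indepo} with the pairwise partition $\pi = (\{1,2\}\,|\,\{3,4\}\,|\,\dots\,|\,\{n-1,n\})$, which has $m = n/2$ cells each of size two. The hypothesis of Corollary~\ref{cor:pairwise} says precisely that $p_{\bs{X}^n}$ factorises according to this $\pi$, so Lemma~\ref{prop:indepo} applies directly and yields
\begin{equation}
\Omega(\bs{X}^n) = \sum_{k=1}^{n/2} \Omega(X_{2k-1}, X_{2k}). \nonumber
\end{equation}

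Next I would invoke Property~(B) of Lemma~\ref{prop:basic}, which states that $\Omega(X_1, X_2) = 0$ for any bivariate distribution $p_{X_1 X_2}$. Applied to each term in the above sum, this gives $\Omega(X_{2k-1}, X_{2k}) = 0$ for every $k \in \{1, \dots, n/2\}$, and therefore $\Omega(\bs{X}^n) = 0$, as claimed.

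There is no real obstacle to overcome: the corollary is a two-line consequence of assembling the additivity result (Lemma~\ref{prop:indepo}) with the vanishing-on-pairs property (Lemma~\ref{prop:basic}(B)). The only implicit assumption worth noting is that $n$ is even, which is required for the pairwise partition to exist; the statement of the corollary already builds this in through the product index $k = 1, \dots, n/2$. One could also remark that the same argument shows more generally that $\Omega$ vanishes on any distribution that factorises into independent pairs and/or singletons, since singleton factors $p_{X_j}$ contribute $\Omega(X_j)$, which is likewise zero by convention (and by trivially extending Property~(B)).
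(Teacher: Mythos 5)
Your proof is correct and follows exactly the same route as the paper's: apply Lemma~\ref{prop:indepo} to the pairwise partition $(\{1,2\}|\{3,4\}|\dots|\{n-1,n\})$ and then kill each summand via Property~(B) of Lemma~\ref{prop:basic}. Your added remarks about $n$ being even and the extension to mixtures of pairs and singletons are sound but not needed.
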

\begin{proof}
Using Eq.~\eqref{eq:qerty} and Lemma~\ref{prop:indepo} we find that
\begin{equation}
\Omega(\bs{X}^n) =  \sum_{k=1}^{n/2} \Omega(X_{2k-1}, X_{2k}) = 0
\enspace,\nonumber
\end{equation}
where the last equality is a consequence of the O-information being zero for
sets of two variables, as shown in Proposition~\ref{prop:basic}.
\end{proof}

Corollary~\ref{cor:pairwise} states that having disjoint pairwise interactions
is a sufficient condition for $\Omega=0$ to hold. However, this condition is
not necessary: from Lemma~\ref{prop:indepo} we can see that a system composed
by redundant ($\Omega>0$) and synergistic ($\Omega<0$) subsystems can attain
zero net O-information due to ``destructive interference.''

As a consequence, the O-information can be understood as the result of a
superposition of behaviours of subsystems. Therefore, $\Omega=0$ can take place
in two qualitatively different scenarios: systems in which redundancies and
synergies are balanced, or systems with only disjoint pairwise effects. Some of
these cases can be resolved by considering the information diagram of $C(\bm
X^n)$ and $B(\bm X^n)$ (c.f. Figure~\ref{fig:diagramD3}), or by studying the
O-information of parts of the system. However, it is important to remark that
redudancy and synergy can coexist either in disjoint subsystems or within the
same variables. An insightful example of the latter case can be found in
Ref.~\cite[Section 2]{james2017multivariate}.

As a final remark, note that systems where pairwise interdependencies are overlapping
(e.g. pairwise maximum entropy models~\cite{cofre2018}) cannot be factorised as
required by Corollary~\ref{cor:pairwise}, and hence can have either positive or
negative O-information~\footnote{For a detailed discussion of this issue for
the case of three variables see~\cite[Sec. 5]{rosas2016understanding}.}.

\section{Relationship with other notions of high-order effects}
\label{sec:other_metrics}

\subsection{High-order interactions in statistical mechanics}

A popular approach to address high-order interactions in the statistical
physics literature is via Hamiltonians that include interaction terms with
three or more variables~\cite{schneidman2003network}. For example, systems of
$n$ spins (i.e. $\mathcal{X}_i = \{-1,1\}$ for $i=1,\dots,n$) that exhibit
$k$-th order interactions are usually represented by probability distributions
of the form
\begin{equation}\label{eq:gibbs}
p_{\bm X^n}(\bm x^n) = \frac{ e^{-\beta \mathcal{H}_k(\bm x^n)}}{Z}~,
\end{equation}
where $\beta$ is the inverse temperature, $Z$ is a normalization constant, and
$\mathcal{H}(\bm x^n)$ is a Hamiltonian given by
\begin{align}
  \mathcal{H}_k(\bm x^n) = &- \sum_{i=1}^n J_i x_i - \sum_{i=1}^{n-1} \sum_{j=i+1}^n J_{i,j} x_i x_j \nonumber \\
& \dots - \sum_{|\bs{\gamma}|=k} J_{\bs{\gamma}} \prod_{i\in \bs{\gamma}} x_i~, \nonumber
  \label{eq:hamiltonian}
\end{align}
with the last sum runing over all subsets $\bs{\gamma}\subseteq \{1,\dots,n\}$
of size $|\bs{\gamma}| = k$. According to Eq.~\eqref{eq:gibbs}, configurations
with lower $\mathcal{H}_k(\bm x^n)$ are more likely to be visited. Note that
$J_i$ quantify external influences acting over individual spins, while
$J_{\bs{\gamma}}$ for $|\bs{\gamma} |\geq 2$ represent the strength of the
interactions; in particular, if $J_{i,k} > 0 $ then the pair $X_i,X_k$ tend to
be aligned, while if $J_{i,k} < 0$ they tend to be anti-aligned. As a matter of
fact, $\bm X^n$ are independent if and only if $J_{\bs{\gamma}} = 0$ for all
$\bs{\gamma}$ with $|\bs{\gamma}|\geq 2$. Models with $k$-th order interactions
have been studied via the maximum entropy
principle~\cite{schneidman2003network}, information
geometry~\cite{amari2001information} and PID~\cite{olbrich2015information}.

Considering the results presented in previous sections, one could expect that
systems with high-order interactions (i.e. large $k$) should attain lower
values of $\Omega$ than systems with low-order interactions (i.e. small $k$).
To confirm this hypothesis, we studied ensembles of systems with $k$-th order
interactions, and analised how the value of $\Omega$ is influenced by $k$. For
this, we considered random Hamiltonians with $J_{\bs{\gamma}}$ drawn i.i.d.
from a standard normal distribution and $\beta=0.1$.

In agreement with intuition, results show that $\Omega$ is usually very close
to zero for $k=2$, and becomes negative as $k$ grows
(Figure~\ref{fig:hamiltonian}). These results suggest that the notion of
synergy measured by $\Omega$ is consistent with the traditional ideas of
high-order interactions from statistical physics.
\begin{figure}[ht]
  \centering
  \includegraphics{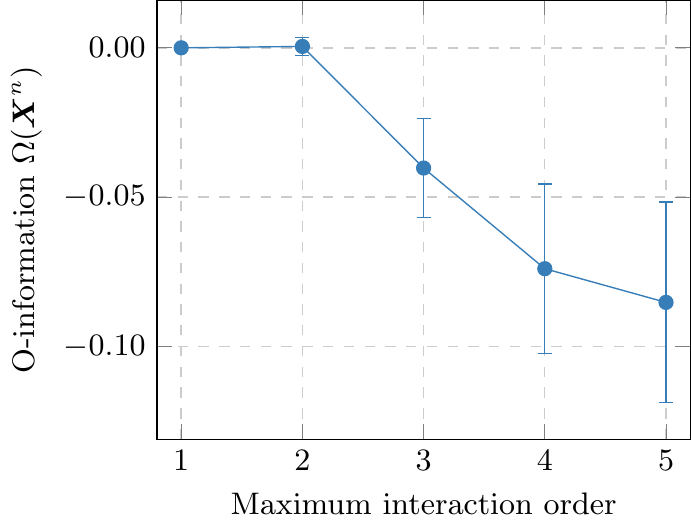}
  \caption{Mean value and confidence intervals of ensembles of systems of $n=5$ spins with randomly generated Hamiltonians. By including high-order interaction terms, net synergy increases and $\Omega$ decreases.}
  \label{fig:hamiltonian}
\end{figure}

\subsection{Complexity and integration}
\label{sec:TSE}

In their seminal 1994 article, Tononi, Edelman, and Sporns devised a measure of
complexity (henceforth called \textit{TSE complexity}) to describe the
interplay between local segregation and global
integration~\cite{tononi1994measure,tononi1998complexity}. The TSE complexity
is defined as
\begin{equation}
  \text{TSE}(\bm X^n) \coloneqq \sum_{k=1}^{n} \left[ \frac{k}{n} C(\bm X^n) - C_n(k) \right]~,
\label{eq:tse}
\end{equation}
where $C_n(k) = {n\choose k}^{-1}\sum_{|\bs{\gamma}| = k} C(\bm
X^{\bs{\gamma}})$ is the average total correlation of the subsets
$\bs{\gamma}\subseteq \{1,\dots,n\}$ of size $|\bs{\gamma}| = k$. By measuring
the convexity of $C_n(k)$, the TSE complexity attempts to distinguish scenarios
that exhibit ``relative statistical independence of small subsets of the system
[\dots] and significant deviations from independence of large subsets''
\cite[Abstract]{tononi1994measure}, in the same spirit as our motivation behind
$\Omega$ above.

To study the relationship between the TSE complexity and the O-information, it
is useful to consider an alternative expression of the former:
\begin{equation}\label{eq:TSE_MI}
  \text{TSE}(\bm X^n) = \sum_{k=1}^{\lfloor n/2 \rfloor} {n\choose k}^{-1} \sum_{|\bs{\gamma}| = k} I(\bm X^{\bs{\gamma}}; \bm X_{-\bs{\gamma}}^n) ~ ,
\end{equation}
\noindent where $X_{-\bs{\gamma}}^n$ represents all the variables that are not
in $\bs{\gamma}$, and $\lfloor \cdot \rfloor$ is the floor function. By noting
the similarities between Eq.~\eqref{eq:TSE_MI} and the sum of $B$ and $C$,
\begin{equation}\label{eq:sumCB}
C(\bm X^n) + B(\bm X^n) = \sum_{i=1} I(X_i; \bm X_{-i}^n) ~ , 
\end{equation}
together with the fact that $\text{TSE}(\bm X^3)=\frac{1}{3}\big[ C (\bm X^3) +
B(\bm X^3) \big]$, we can hypothesise that, qualitatively,
\begin{equation}\label{eq:approx}
\text{TSE}(\bm X^n) \propto C(\bm X^n) + B(\bm X^n) ~ .
\end{equation}
Monte Carlo simulations show that this approximation is justified: when
evaluated on distributions $p_{\bm X^n}$ sampled uniformly at random from the
probability simplex, the correlation of Eq.~\eqref{eq:approx} and TSE is
consistently above $0.97$ (Figure~\ref{fig:tse}). Moreover,
Eq.~\eqref{eq:approx} outperforms other proposed approximations of the TSE
complexity~\footnote{\unexpanded{In \cite[Fig.~2]{tononi1998complexity} the
binding entropy (under the name ``interaction complexity'') is proposed as a
metric ``related but not identical to neural complexity.'' Numerical
evaluations show that the combination of total correlation and binding entropy,
as proposed in~\eqref{eq:approx}, is a more accurate approximation for the TSE
complexity (results not shown).}}.

\begin{figure}[ht]
  \centering
  \includegraphics{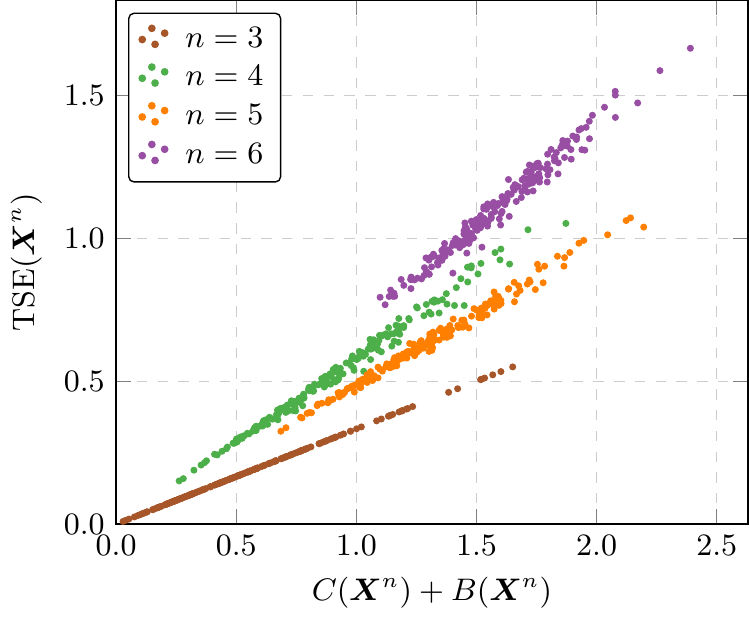}
  \caption{(Color) The sum of the total correlation and the binding entropy is a very accurate approximation of the TSE complexity. Each dot is a probability distribution over $n$ bits sampled uniformly at random from the probability simplex.}
  \label{fig:tse}
\end{figure}

Figure \ref{fig:tse} and Eq.~\eqref{eq:approx} suggest that the TSE complexity
is large when either the shared randomness or the collective constraints are
large. As a more direct example, we evaluate TSE in a distribution given by a
linear mixture of a 3-bit copy and a 3-bit \texttt{xor}, showing that TSE has
exactly the same value in both extremes, and hence that it conflates redundancy
with synergy (Figure~\ref{fig:conflate}).

Taken together, our results show that the TSE complexity is a good metric of
overall integration between parts of the system, but it generally fails to
detect synergistic phenomena. Overall, the fact that
\begin{equation}\label{eq:OvsTSE}
  \begin{split}
    \Omega &= C - B~, \\
    \text{TSE} &\propto C + B~,
  \end{split}
\end{equation}
suggests that the TSE complexity and the O-information are complementary,
corresponding to an insightful ``change of basis'' from an elementary
constraints vs randomness representation. Effectively, while both $C$ and $B$
provide two measures of roughly the same phenomenon (interdependency strength),
$\Omega$ and $\text{TSE}$ refer to different aspects: TSE gives an overarching
account of the strength of the interdependencies within $\bm X^n$, and $\Omega$
indicates whether these correlations are predominantly redundant or
synergistic.

\begin{figure}[ht]
  \centering
  \includegraphics{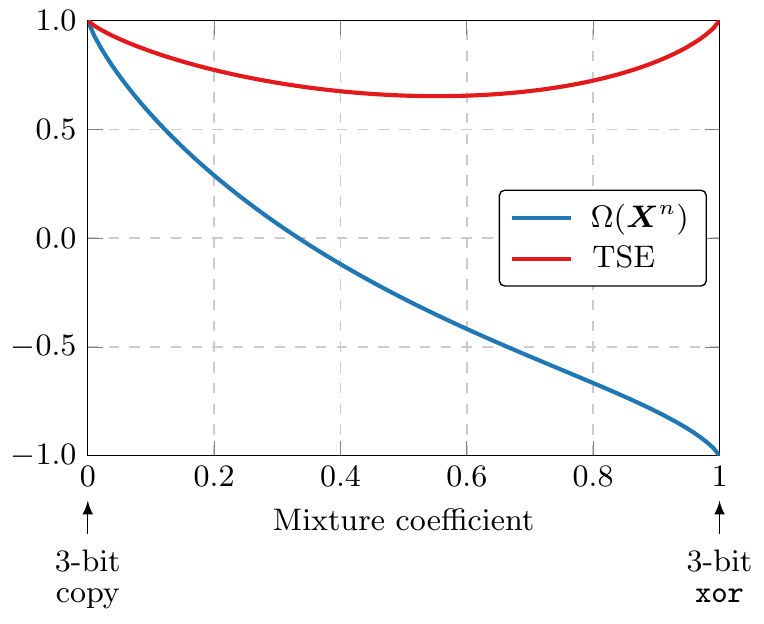}
  \caption{(Color) TSE and $\Omega$ evaluated on a distribution resulting from a linear
  mixture between a copy (left) and an \texttt{xor} (right), showing that the
TSE complexity conflates synergy and redundancy. Figure shows the case $n=3$,
but results are qualitatively similar for larger systems}
  \label{fig:conflate}
\end{figure}

\section{Case study: Baroque music scores}
\label{sec:5}

To illustrate the proposed framework in a data-driven application, this section
presents a study of the multivariate statistics of musical scores from the
Baroque period. In the sequel, Section~\ref{sec:method} describes the procedure
to obtain and analyse the data. Results are then presented in
Section~\ref{sec:results}. These results are a brief demonstration of the value
for the O-information for practical data analysis.

\subsection{Method description}
\label{sec:method}

\subsubsection{Data}

Our analysis focuses on two sets of repertoire: the well-known chorales for
four voices by Johann Sebastian Bach (1685-1750), and the Opus 1 and 3-6 by
Arcangelo Corelli (1653-1713). All of these works correspond to the Baroque
period (approx. 1600--1750), which is characterised by elaborate counterpoint
between melodic lines. Baroque music usually exhibits a balance in the interest
and richness of the parts of all the involved instruments, contrasting with the
subsequent Classic (1730--1820) and Romantic (1780--1910) periods where higher
voices tend to take the lead.

Our analysis is based on the electronic scores publicly available at
\url{http://kern.ccarh.org}. We focused on scores with four melodic lines: four
voices (soprano, alto, tenor and bass) in the case of Bach's chorales, and four
string instruments (1\textsuperscript{st} violin, 2\textsuperscript{nd} violin,
viola and cello) in the case of Corelli's pieces. The scores were pre-processed
in Python using the \texttt{Music21} package
(\url{http://web.mit.edu/music21}), which allowed us to select only the pieces
writen in Major mode and to transpose them to C Major. The melodic lines were
transformed into time series of 13 possible values (one for each note plus one
for the silence), using the smallest rhythmic duration as time unit. This
generated $\approx 4\times 10^4$ four-note chords for the chorales, and
$\approx 8\times 10^4$ for Corelli's pieces. With these data, the joint
distribution of the values for the four-note chords was estimated using their
empirical frequency~\footnote{\unexpanded{Regularisation methods (such as
Laplace smoothing) were found to have strong effects the results. We decided
not to use such methods, as some chords (e.g. C-C$\sharp$-D-D$\sharp$) are just
not going to take place in the Baroque repertoire.}}.

\subsubsection{Research questions and tools}

We focus on the multivariate statistics of the harmonic structures of these
pieces. In particular, we ask to what extent the notes played simultaneously by
different instruments are redundant or synergistic. Our study focuses
exclusively on harmony and chords, leaving melodic properties to future
studies.

Let us denote by $\bs{X}^4$ the random vector of notes, where
$|\mathcal{X}|=13$. We first compute the marginal entropy of each voice,
$H(X_k)$, which is an indicator of harmonic richness. We also compute the
O-information of the ensemble $\Omega(\bs{X}^4)$, which determines the dominant
behaviour. Interestingly, for $n=4$ the decomposition in
Eq.~\eqref{bound_o-info} yields
\begin{equation*}
\Omega(\bs{X}^4) = I(X_i;X_j;X_k,X_l) + I(X_k;X_l;X_i,X_j)
\end{equation*}
for $\{i,j,k,l\} = \{1,2,3,4\}$. One can gain a fine-grained view of $\Omega$
by considering these interaction information terms, which can be seen as local
contributions to $\Omega$. More formally, we define the \emph{local
O-information} between $X_i$ and $X_j$ as
\begin{align}
  \omega_{ij}(\bm X^n) \coloneqq I(X_i;X_j;\bm X^n_{-ij}) ~ ,
  \label{eq:local_o-info}
\end{align}
\noindent such that $\Omega$ can be decomposed as a sum of local $\omega$.
Interestingly, these local terms could be of the opposite sign to the global
$\Omega(\bs{X}^n)$, indicating local synergy (or redudancy) between two
components within a predominantly redundant (or synergistic) system.

Since all the $X_k$ take values among alphabets of cardinality $13$, we perform
all computations employing logarithms to base $13$, so that $H(X_k) \leq 1$. We
call this unit a \emph{mut}, for \emph{musical bit}.

\subsection{Results}
\label{sec:results}

\begin{figure*}[t!]
  \centering
  \includegraphics{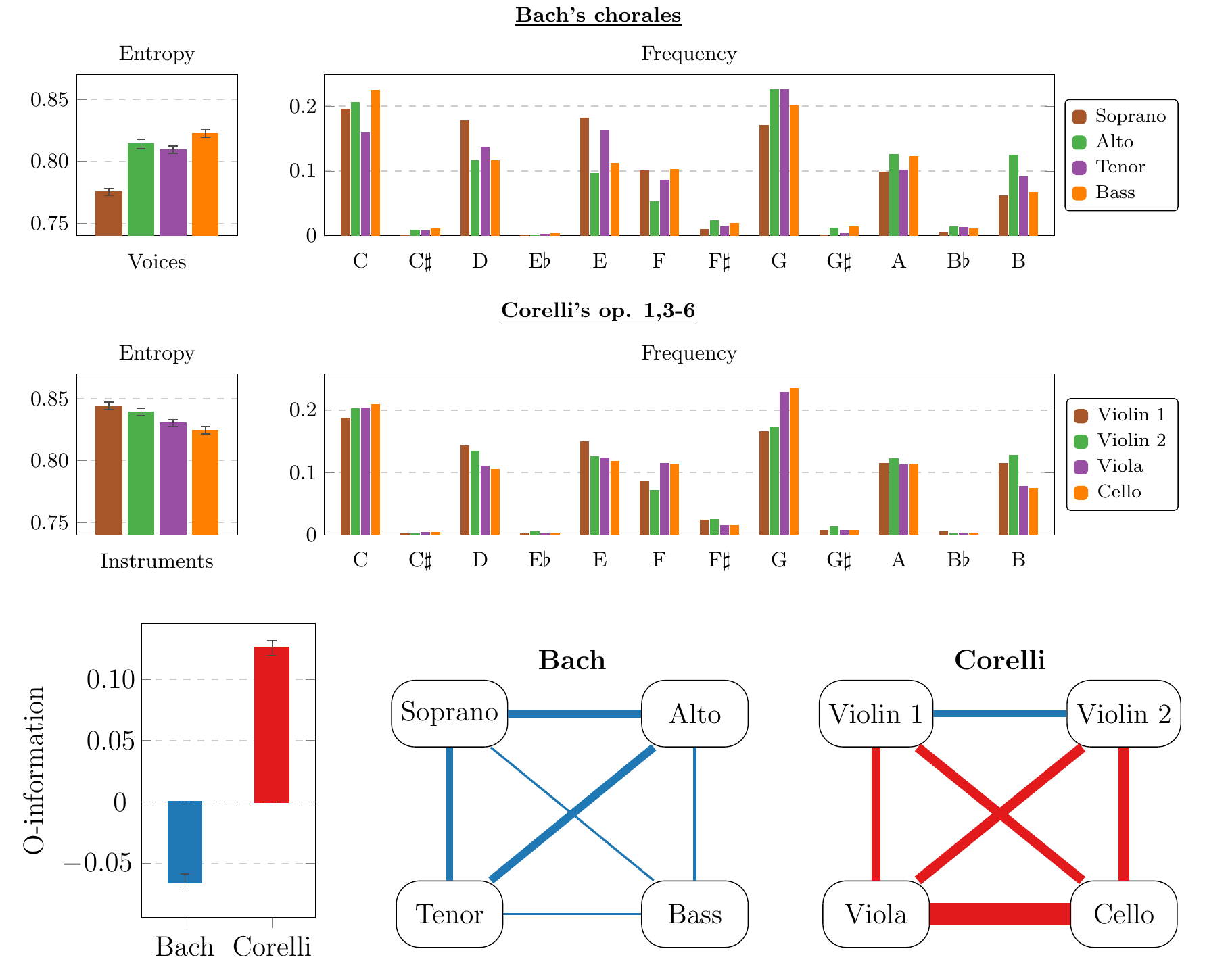}
  \caption{(Color) \textit{Above:} Entropy of the frequencies of appeareance of each
    note in the studied pieces of Bach and Corelli, measured in muts (logarithm
    to base 13); standard errors were estimated via circular block-bootstrap. While the higher voices in Corelli have higher entropy, Bach's
    soprano has a lower entropy than all other voices. \textit{Below:} Global
    O-information (left), and networks of local O-information (middle, right)
    with red reflecting redundancy ($w_{ij} >0$) and blue synergy ($w_{ij}  <
    0$). While Bach's chorales are synergy-dominated, the pieces of Corelli are
    strongly redundant (mainly due to the viola and cello). 
}
  \label{fig:music_results}
\end{figure*}
 
By studying the entropies of each voice, our results confirm that the four
voices in these Baroque scores tend to have similar harmonical richness
(Figure~\ref{fig:music_results}, top left). In fact, their values are similar
(although slightly lower) than $\log_{13}7 \approx 0.845$ muts, which
corresponds to a uniform distibution over the seven notes of a major scale
(notes without sharp or flat). Also, our results show that the entropies in the
music of Corelli are higher for instruments with higher register (i.e. the
violins). In contrast, in Bach's music the soprano has significantly less
entropy than the other voices. This can be explained by the fact that Bach's
pieces were made to be used in public religious services, with the soprano
conveying a melodic line that was intended to be sung by the attendees -- and
hence its structure is simpler to make it easy to sing.

Most strikingly, our analyses of the multivariate structure of the pieces show
that Bach's chorales have negative O-information, suggesting that the harmonic
structure of these pieces is dominated by synergistic effects
(Figure~\ref{fig:music_results}, bottom left). This result is further confirmed
by the fact that all the local O-information terms are negative, which means
that the pairwise dependence between any pair of voices is comparatively
smaller than the global dependencies that exists within the group (see
Table~\ref{tab:sym}).

\begin{table}[htbp]
\begin{center}

  \caption{Multivariate statistics of Baroque repertoire. For each pair of
    voices or instruments, we report the mutual information (MI), conditional
    mutual information (CMI), and local O-information ($\omega_{ij}$).
    Quantities are measured in musical bits, or \emph{muts} (logarithm to base
    13).  Standard errors were estimated via circular block-bootstrap, and in all cases are
    below the least significant figure shown in the table.
}

\label{tab:sym}
\begin{tabular}{r r >{\hspace{5pt}}c >{\hspace{5pt}}c >{\hspace{5pt}}c}
\multicolumn{5}{c}{}\\
\multicolumn{5}{c}{\textbf{\underline{Bach's chorales}}}\\
\multicolumn{5}{c}{}\\
\specialrule{.13em}{.0em}{.15em}
& & MI & CMI & $\omega_{ij}$ \\
\specialrule{.05em}{.08em}{.07em}
\textbf{Soprano} & \textbf{Alto}  & ~ 0.14 ~ & ~ 0.19 ~ & -0.05 \\
\textbf{Soprano} & \textbf{Tenor} & ~ 0.12 ~ & ~ 0.16 ~ & -0.04 \\
\textbf{Soprano} & \textbf{Bass}  & ~ 0.15 ~ & ~ 0.16 ~ & -0.02 \\
\textbf{Alto}    & \textbf{Tenor} & ~ 0.17 ~ & ~ 0.22 ~ & -0.05 \\
\textbf{Alto}    & \textbf{Bass}  & ~ 0.15 ~ & ~ 0.17 ~ & -0.02 \\
\textbf{Tenor}   & \textbf{Bass}  & ~ 0.15 ~ & ~ 0.17 ~ & -0.02 \\
\specialrule{.13em}{.15em}{.0em} 
\multicolumn{5}{c}{}\\
\multicolumn{5}{c}{}\\
\multicolumn{5}{c}{\textbf{\underline{Corelli's op. 1,3-6}}}\\
\multicolumn{5}{c}{}\\
\specialrule{.13em}{.0em}{.15em}
& & MI & CMI & $\omega_{ij}$ \\
\specialrule{.05em}{.08em}{.07em}
\textbf{Violin\,1} & \textbf{Violin\,2}  & 0.071  & 0.115  & -0.04\hphantom{-} \\
\textbf{Violin\,1} & \textbf{Viola}    & 0.086  & 0.028  & 0.06  \\
\textbf{Violin\,1} & \textbf{Cello}    & 0.095  & 0.034  & 0.06  \\
\textbf{Violin\,2} & \textbf{Viola}    & 0.118  & 0.054  & 0.07  \\
\textbf{Violin\,2} & \textbf{Cello}    & 0.107  & 0.039  & 0.07  \\
\textbf{Viola}   & \textbf{Cello}    & 0.630  & 0.460  & 0.17  \\
\specialrule{.13em}{.15em}{.0em} 
\end{tabular}
\end{center}
\end{table}

In contrast, Corelli's pieces have positive O-information, suggesting that they
are dominated by a redundant component. Interestingly, the local O-information
has a positive value for all pairs except for violins 1 and 2. The strongest
O-information is the one between viola and cello, indicating that the parts of
these two instruments are highly redundant.

The redundancy in the pieces of Corelli can be explained by compositional
practices for intrumental music in the Baroque period. In fact, the original
score of many of the studied pieces was written for only three parts: two
solists and a bass line called ``basso continuo.'' This bass line was suposed
to be interpreted in different ways by the bass instruments, which in this case
correspond to viola and cello. Therefore, it is fair to say that these
instruments are redundant, as both of them are carrying the same bass line.
Despite this redundancy, the relationship between the violins is still
synergistic, which is appropiately captured by the negative value of their
local O-information.

The dominance of synergy in the case of Bach can be argued to serve an artistic
purpose -- in effect, in the Baroque period the aim was that each voice should
introduce unique elements into the piece. This goal could be achieved by
superposing unrelated melodies; however, the overall result might not have been
aesthetically pleasing due to the lack of global coordination. A synergistic
structure serves this purpose well, as it provides global constraints that
ensure collective coherence while imposing weak pairwise constraints.

\section{Conclusion}
\label{sec:discussion}

We introduced $\Omega(\bm X^n)$ as the difference between strength of the
collective constraints and the shared randomness in a multivariate system $\bm
X^n$. We argued that $\Omega$ captures the net balance between statistical
synergy and redundancy, since (i) it is a sum of triple interaction
informations, (ii) it is maximised (minimised) by an $n$-bit copy
(\texttt{xor}), and (iii) it imposes bounds over the intedependency allowed at
different scales. According to this framework, synergistic systems are
characterised by a large amount of shared randomness regulated by weak
collective constraints, which is consistent with recent approaches to study
emergence based on constructive logic~\cite{pascual2018constructive}. Moreover,
in deriving $\Omega$, we also provided a joint source of explanation for three
long-standing extensions of Shannon's mutual information (total correlation,
binding entropy, and interaction information) in terms of shared randomness and
collective constraints. The proposed framework is straightforward to generalise
to continuous variables and apply to neural data, which will be presented in a
separate publication.

The O-information was compared to other notions of high-order effects, most
notably the TSE complexity~\cite{tononi1994measure}. We found that TSE does not
measure statistical synergy as such, but total correlation strength. Moreover,
our analysis suggest that $\Omega$ and TSE are complementary metrics: TSE gives
an overarching account of the strength of the interdependencies within $\bm
X^n$, and the O-information reveals whether these correlations are
predominantly redundant or synergistic. We take this as a step towards a
multi-dimensional framework that allows for a finer and more subtle taxonomy of
complex systems.

Finally, we applied our framework to Baroque music scores and found
that Bach's chorales, unlike pieces by some of his contemporaries, are strongly
synergistic as measured by $\Omega$. Informally, we can speculate about the
artistic role of synergy: synergistic music (like Bach's) allows each voice to
contribute unique material while ensuring an overall harmonious integration of
the ensemble. This delicate balance has an intriguing similarity with the
coexistence of integration and differentiation in brain
activity~\cite{tononi1998complexity,balduzzi2008integrated}, suggesting 
unexplored relationships between music structure and neural organisation.

\section*{Acknowledgements}

The authors thank Shamil Chandaria, Alberto Pascual and Nicolas Rivera for
insightful discussions. Fernando Rosas was supported by the European Union's
H2020 research and innovation programme, under the Marie Sk\l{}odowska-Curie
grant agreement No. 702981.

\appendix

\section{Compatibility between $\Omega$ and prior work}
\label{app:selforg}

In prior work~\cite{rosas2018selforg}, we introduced $\psi(k)$ as
\begin{align}
  \psi(k) \coloneqq \max_{j\in\{1,\dots,n\}} \max_{\substack{\boldsymbol{\gamma} \subseteq \{1,\dots,n\}\\ |\bs{\gamma}|=k, j \notin \bs{\gamma}}} I(X_j ; \boldsymbol{X}^{\boldsymbol{\gamma}}) ~. \nonumber
\end{align}
The growth profile of this non-decreasing function was taken as an indicator of
the leading quality of the interdependency structure of $\bm X^n$, being
convexity associated with statistical synergy, and concavity with
redundancy~~\cite[Definition 2]{rosas2018selforg}.

The relationship between these ideas and the ones developed in this article can
be established by noting that convexity in $\psi(k)$ implies that small scales
of the system are relatively independent while large scales show correlation,
which -- due to the results of Section~\ref{sec:scales} -- is the key
characteristic of synergy-dominated systems. Conversely, concavity in $\psi(k)$
implies that some small groups of variables are highly correlated, which
implies a relatively high value of $C(\bm X^n)$ and $\Omega(\bm X^n)$.

To enable a quantitative comparison between $\psi(k)$ and $\Omega$, one can
quantify the convexity/concavity of the former by measuring the distance from
$\psi(k)$ to a straight line joining $\psi(1)$ and $\psi(n)$ as
\begin{equation}
\Psi(\bm X^n) \coloneqq \sum_{k=1}^n \left[ \psi(k) - \left( \frac{k}{n} \big[\psi(n)-\psi(1)\big] + \psi(1) \right) \right]~. \nonumber
\end{equation}
We computed $\Omega$ and $\Psi$ of binary systems of different sizes generated
randomly from a uniform distribution over the corresponding probability
simplex. Our results show a good agreement between these two metrics, which
confirms the analytic reasoning presented above.
\begin{figure}[ht]
  \centering
  \includegraphics{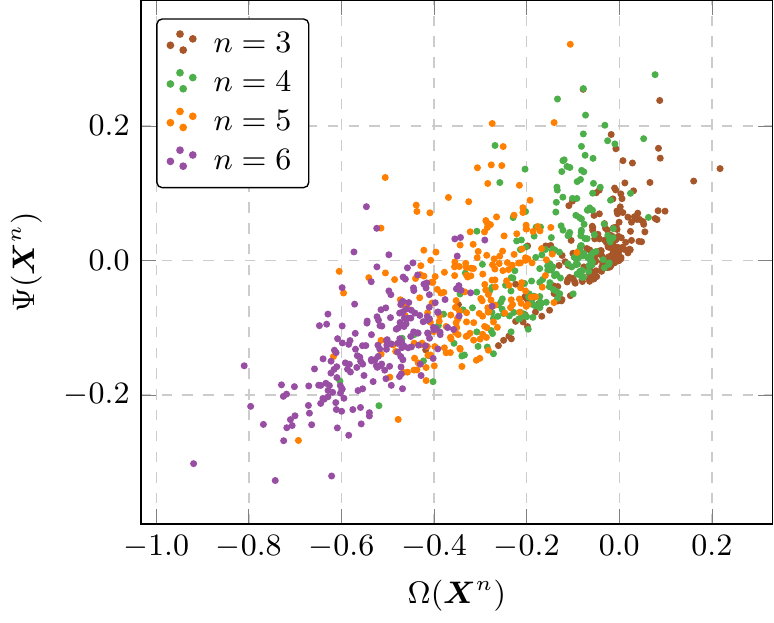}
  \caption{The O-information and $\Psi$ -- introduced in our previous work~\cite{rosas2018selforg} -- have good agreement.}
  \label{fig:selforg}
\end{figure}

In summary, $\Omega$ can be regarded as a formalisation of the intuitive
notions introduced in \cite{rosas2018selforg}. Moreover, $\Omega$ possesses
more theoretical properties than $\Psi$ and requires the calculation of a
smaller number of terms.

\section{$R(\pi)$ decreases for finer partitions}
\label{app:residual}

\begin{lemma}\label{lemmaX}
Let us consider two partitions $\pi_\text{a} =
(\bs{\alpha}_1|\dots|\bs{\alpha}_K)$ and $\pi_\text{b} = (\bs{\beta}_1 | \dots
| \bs{\beta}_J)$ such that $\pi_\text{b}\succeq \pi_\text{a}$. Then, $
R(\pi_\text{b}) \leq R(\pi_\text{a}) $.
\end{lemma}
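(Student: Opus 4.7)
The plan is to reduce the claim to the elementary case where $\pi_\text{b}$ \emph{covers} $\pi_\text{a}$, and then to show monotonicity at that single refinement step using only the chain rule and the fact that conditioning reduces entropy.

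First, I would invoke the lattice structure of $\mathcal{P}_n$ described in Section~\ref{sec:decompositions}: since $\pi_\text{b}\succeq \pi_\text{a}$, there exists a path in $\mathcal{G}_n$ from $\pi_\text{a}$ to $\pi_\text{b}$ along covering edges. By induction on the length of this path, it suffices to establish $R(\pi_\text{b})\leq R(\pi_\text{a})$ in the covering case, i.e.\ when $\pi_\text{b}$ is obtained from $\pi_\text{a}$ by splitting exactly one cell into two.

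Next I would set up the bookkeeping. Write $\pi_\text{a} = (\boldsymbol{\alpha}_1|\boldsymbol{\alpha}_2|\dots|\boldsymbol{\alpha}_K)$ and assume, without loss of generality, that $\pi_\text{b}$ is obtained by partitioning $\boldsymbol{\alpha}_1$ into $\boldsymbol{\alpha}_1'$ and $\boldsymbol{\alpha}_1''$, leaving the other cells intact. The key observation is that for every cell $\boldsymbol{\alpha}_k$ with $k\geq 2$, the conditioning variables in $R_{\boldsymbol{\alpha}_k}$ consist of everything \emph{outside} $\boldsymbol{\alpha}_k$, both under $\pi_\text{a}$ and under $\pi_\text{b}$; since conditional entropy depends only on the set of conditioning random variables and not on how they are grouped, these terms contribute identically to $R(\pi_\text{a})$ and $R(\pi_\text{b})$. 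Hence the inequality reduces to comparing the contributions of the split cell:
\begin{equation*}
H(\bs{X}^{\boldsymbol{\alpha}_1'}\mid \bs{X}^{\boldsymbol{\alpha}_1''},\bs{Z}) + H(\bs{X}^{\boldsymbol{\alpha}_1''}\mid \bs{X}^{\boldsymbol{\alpha}_1'},\bs{Z}) \;\leq\; H(\bs{X}^{\boldsymbol{\alpha}_1'},\bs{X}^{\boldsymbol{\alpha}_1''}\mid \bs{Z}),
\end{equation*}
where $\bs{Z}\coloneqq(\bs{X}^{\boldsymbol{\alpha}_2},\dots,\bs{X}^{\boldsymbol{\alpha}_K})$.

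The last step is the easy one. Using the chain rule on the right-hand side,
\begin{equation*}
H(\bs{X}^{\boldsymbol{\alpha}_1'},\bs{X}^{\boldsymbol{\alpha}_1''}\mid \bs{Z}) = H(\bs{X}^{\boldsymbol{\alpha}_1'}\mid \bs{Z}) + H(\bs{X}^{\boldsymbol{\alpha}_1''}\mid \bs{X}^{\boldsymbol{\alpha}_1'},\bs{Z}),
\end{equation*}
so the inequality collapses to $H(\bs{X}^{\boldsymbol{\alpha}_1'}\mid \bs{X}^{\boldsymbol{\alpha}_1''},\bs{Z}) \leq H(\bs{X}^{\boldsymbol{\alpha}_1'}\mid \bs{Z})$, which is simply the statement that further conditioning cannot increase entropy. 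I do not anticipate a real obstacle: the only moving part is recognising that the cells untouched by the refinement contribute the same conditional entropy on both sides, after which a one-line chain-rule manipulation closes the argument.
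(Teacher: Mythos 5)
Your proposal is correct and follows essentially the same route as the paper: reduce to a single covering step, note that the residual terms of the untouched cells are unchanged, and show that the split cell's residual can only decrease. The paper identifies the decrease directly as the conditional mutual information $I(\bs{X}^{\tilde{\bs{\alpha}}_m};\bs{X}^{\tilde{\bs{\alpha}}_{m+1}}\mid \bs{Z})\geq 0$, which is exactly what your chain-rule plus ``conditioning reduces entropy'' manipulation computes.
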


\begin{proof}
Let us assume that $\pi_\text{a} = (\bs{\alpha}_1|\dots|\bs{\alpha}_K\}$,
$\pi_\text{b} = (\bs{\beta}_1 | \dots | \bs{\beta}_J)$ such that
$\pi_\text{b}\succeq \pi_\text{a}$, and consider a path $\texttt{p}=
(\pi_1,\dots,\pi_L)$ in $\texttt{P}(\pi_\text{a},\pi_\text{b})$ so that
$\pi_1=\pi_\text{a}$ and $\pi_L=\pi_\text{b}$. To prove the Lemma suffices to
show that $R(\pi_{j+1}) \leq R(\pi_j)$ for $j=1,\dots,L-1$. As
$\pi_1,\dots,\pi_n$ are related by covering relationships, one just needs to
prove the inequality for two partitions such that one covers the other.

Consider $\pi_1,\pi_2\in\mathcal{P}_n$ such that $\pi_2$ covers $\pi_1$. As
both partitions differ only in one elementary refinement, let us without loss
of generality assume that the refinement is done on the last cell of $\pi_1$;
i.e. $\pi_1=(\bs{\alpha}_1|\dots|\bs{\alpha}_m)$ and
$\pi_2=(\bs{\alpha}_1|\dots|\bs{\alpha}_{m-1}|\tilde{\bs{\alpha}}_m|
\tilde{\bs{\alpha}}_{m+1})$ so that $\tilde{\bs{\alpha}}_m \cup 
\tilde{\bs{\alpha}}_{m+1} = \bs{\alpha}_m$ and $\tilde{\bs{\alpha}}_m \cap
\tilde{\bs{\alpha}}_{m+1} = \emptyset$. Then
\begin{align}
  R(\pi_1) - R(\pi_2) &= R_{\bs{\alpha}_m} - (R_{\tilde{\bs{\alpha}}_m} + R_{\tilde{\bs{\alpha}}_{m+1}} ) \nonumber \\
&= I( \bs{X}^{\tilde{\bs{\alpha}}_m}; \bs{X}^{\tilde{\bs{\alpha}}_{m+1}} | \bs{X}^{\bs{\alpha}_1}\dots \bs{X}^{\bs{\alpha}_{m-1}} ) \nonumber\\
&\geq0 \nonumber
\enspace,%
\end{align}
proving the desired result.
\end{proof}

\section{Proof of Lemma~\ref{prop:TC_DCT}  }
\label{app:TC_DCT}

\begin{proof}
Consider a path $\mathtt{p} \in \texttt{P}(\pi_\text{source},\pi_\text{sink})$,
so that $\texttt{p}= (\pi_1,\dots,\pi_L) $ with $\pi_1=\pi_\text{source}$ and
$\pi_L=\pi_\text{sink}$. Then, by using Eqs.~\eqref{eq:def_pathweight1} and
\eqref{weight_h}, a direct calculation shows that
\begin{align}
W(\texttt{p};v_h) &=  \sum_{j=1}^{L-1} \big[ H(\pi_{j+1}) - H(\pi_{j}) \big] \nonumber\\
&=  H(\pi_\text{sink}) - H(\pi_\text{source}) \nonumber\\
&= \sum_{i=1}^nH(X_i) - H(\bs{X}^n)~. \nonumber
\end{align}
Similarly, using Eqs.~\eqref{eq:def_pathweight1} and \eqref{weight_r} gives
\begin{align}
W(\texttt{p};v_\text{r}) =& \sum_{j=1}^{L-1} \left[ R(\pi_{j}) - R(\pi_{j+1}) \right] \nonumber \\
=& R(\pi_\text{source}) - R(\pi_{\text{sink}}) \nonumber\\
=& H(\bs{X}^n) - \sum_{i=1}^n H(X_i | \bs{X}^n_{-i})~. \nonumber
\end{align}
Both results make use of the fact that $W(\texttt{p};v_\text{h})$ and
$W(\texttt{p};v_\text{r})$ are telescopic sums and all but the first and last
terms cancel out.
\end{proof}

\section{Proof of Proposition~\ref{teo:1} }
\label{app:theo:1}

\begin{proof}
Let us consider a path $\mathtt{p} \in
\texttt{P}(\pi_\text{source},\pi_\text{sink})$. Then,
\begin{align}
W(\mathtt{p};v_\text{s}) &= \sum_{j=1}^L v_\text{s}(\pi_j,\pi_{j+1})  \label{eq:dec1}\\
&= \sum_{j=1}^L v_\text{h}(\pi_j,\pi_{j+1}) - \sum_{k=1}^L v_\text{r}(\pi_k,\pi_{k+1}) \nonumber\\
&= C(\bs{X}^n) - B(\bs{X}^n) = \Omega(\bs{X}^n), \nonumber
\end{align}
which proves the first part of the theorem.

Thanks to Eq.~\eqref{eq:dec1}, one can prove the second part of the Theorem by
showing that if $\pi_\text{a},\pi_\text{b}\in\mathcal{P}_n$ such that
$\pi_\text{b} \succeq \pi_\text{a}$, then $v_\text{s}(\pi_1,\pi_2)$ is equal to
an interaction information. To show this, first note that if $\pi_\text{b}
\succeq \pi_\text{a}$ then both partitions differ only in one elementary
refinement. Without no loss of generality, we assume that the refinement is
done on the last cell, such that
$\pi_\text{a}=(\bs{\alpha}_1|\dots|\bs{\alpha}_m)$ and
$\pi_\text{b}=(\bs{\alpha}_1|\dots|\bs{\alpha}_{m-1}|\tilde{\bs{\alpha}}_m|\tilde{\bs{\alpha}}_{m+1})$
such that $\tilde{\bs{\alpha}}_m \cap \tilde{\bs{\alpha}}_{m+1} = \emptyset$
and $\tilde{\bs{\alpha}}_m \cup \tilde{\bs{\alpha}}_{m+1} = \bs{\alpha}_m$.
Then,
\begin{align}
v_\text{s}(\pi_\text{a},\pi_\text{b})  &= v_\text{h}(\pi_\text{a},\pi_\text{b}) - v_\text{r}(\pi_\text{a},\pi_\text{b}) \nonumber\\
&= \big[ H(\pi_\text{b}) - H(\pi_\text{a}) \big] - \big[ R(\pi_\text{a}) - R(\pi_\text{b}) \big] \nonumber\\
&= I(\bs{X}^{\tilde{\bs{\alpha}}_m}; \bs{X}^{\tilde{\bs{\alpha}}_{m+1}}) \nonumber\\
&\quad - I(\bs{X}^{\tilde{\bs{\alpha}}_m}; \bs{X}^{\tilde{\bs{\alpha}}_{m+1}} | \bs{X}^{\bs{\alpha}_1}\dots \bs{X}^{\bs{\alpha}_{m-1}}) \nonumber\\
&=  I(\bs{X}^{\tilde{\bs{\alpha}}_m}; \bs{X}^{\tilde{\bs{\alpha}}_{m+1}} ; \bs{X}^{\bs{\alpha}_1}\dots \bs{X}^{\bs{\alpha}_{m-1}}) \nonumber
\enspace,
\end{align}
which proves the desired result.
\end{proof}

\section{Proof of Lemma~\ref{prop:bounds}}
\label{app:bounds}

\begin{proof}
Let us first note that
\begin{align}
\log |\mathcal{X}| &\geq I(X_i;X_j|X_k) \geq 0\enspace,\label{bounds1}\\
\log |\mathcal{X}| &\geq I(X_i;X_j;X_k) 
\geq - \log |\mathcal{X}| \enspace, \label{bounds2}
\end{align}
for all $i,j,k \in \{1,\dots,n\}$. Above, Eq.~\eqref{bounds2} follows from
noting that $I(X_i;X_j;X_k) = I(X_i;X_j) - I(X_i;X_j|X_k)$, and applying the
bounds in Eq.~\eqref{bounds1}. The proposition is proved by applying these
inequalities on Eqs.~\eqref{bound_TC}, \eqref{bound_DTC}, \eqref{bound_o-info},
and \eqref{eq:sumCB}. Finally, the tightness of the bounds is a direct
consequence of the tightness of Eqs.~\eqref{bounds1} and \eqref{bounds2}.
\end{proof}

\section{Proof of Proposition~\ref{prop:iff}}
\label{app:iff}

\begin{proof}
Let us first prove the first statement. By considering $\bs{X}^n$ to be a
$n$-bit copy, a direct calculation using Eqs.~\eqref{bound_TC} and
\eqref{bound_DTC} shows that $C(\bs{X}^n) = n-1$ and $B(\bs{X}^n) = 1$, and
therefore the upper bound is attained. To prove the converse, let us start by
assuming that $\Omega(\bs{X}^n)=n-2$. By applying \eqref{bounds2} to each term
in \eqref{bound_o-info}, is clear that $I(X_j;\bs{X}^{j-1};\bs{X}_{j+1}^n) = 1$
holds for all $j\in\{1,\dots,n\}$. In particular $I(X_2;X_1;\bs{X}_3^n)=1$
holds, which due to Eq.~\eqref{bound_o-info} implies that $I(X_2;X_1|\bs{X}_3^n) =
0$ and hence $I(X_2;X_1)=1$, which in turns implies that $X_1$ and $X_2$ are
Bernoulli distributed with parameter $p=1/2$, and also that $X_1=X_2$. By
relabelling the variables and following the same rationale one can prove that
every pair of variables are equal to each other, which proves that $\bs{X}^n$
is a $n$-bit copy.

Let us prove the second statement. By considering now $\bs{X}^n$ to be a
$n$-bit \texttt{xor}, using Eqs.~\eqref{bound_TC} and \eqref{bound_DTC} it is
direct to check that $C(\bs{X}^n) = 1$ and $B(\bs{X}^n) = n-1$, and hence the
lower bound is attained. To prove the converse, let us assume that $\bs{X}^n$
is such that $\Omega(\bs{X}^n)=2-n$. By considering the bounds given by
Eq.~\eqref{bounds2} in Eq.~\eqref{bound_o-info}, this implies that
$I(X_j;\bs{X}^{j-1};\bs{X}_{j+1}^n) = -1$ for all $j\in\{2,\dots,n-1\}$, and in
particular $I(\bs{X}^{n-2} ; X_{n-1};X_n) = -1$. Due to Eq.~\eqref{bounds2}, this
implies in turn that $I(\bs{X}^{n-2};X_{n-1})=0$, and via relabeling one can
prove that $\bs{X}^{n-1}$ are jointly independent. Moreover, $I(\bs{X}^{n-2} ;
X_{n-1};X_n) = -1$ also implies that $I(X_{n-1};X_n|\bs{X}^{n-2})=1$, which
implies that
\begin{equation}
I(\bs{X}^{n-1} ; X_n) = I(X_{n-1};X_n|\bs{X}^{n-2}) + I(\bs{X}^{n-2};X_n)= 1. \nonumber
\end{equation}
This equality implies that $X_n$ is Bernoulli distributed with $p=1/2$, and
that $X_n$ is a deterministic function of $\bs{X}^{n-1}$. Moreover, the fact
that $I(X_1;X_n|\bs{X}_2^{n-1})=1$ implies that for given $\bs{X}_2^{n-1}$ then
$X_n$ is a function of $X_1$, while via relabelling one finds that
$I(X_1;X_n)=0$. Since the only functions with these properties are functions
isomorphic to an $n$-variate \texttt{xor}, this proves the desired result.
\end{proof}

\section{Proof of Proposition~\ref{theo:bounds}}\label{app:bounds}

The following proof uses Lemma~\ref{lemma:bounds_C}, which is stated and proved
afterwards in this Appendix.

\begin{proof}
To prove Eq.~\eqref{eq:11}, first note that
\begin{equation}
\Omega(\bs{X}^n) = C(\bs{X}^{n-1}) - B(\bs{X}^{n-1}|X_n)  \leq C(\bs{X}^{n-1})~. \nonumber
\end{equation}
Then, the inequality follows form a direct application of
Lemma~\ref{lemma:bounds_C}. As $C(\bs{X}^m) \geq 0$, the equality becomes
non-trivial when
\begin{equation}
\Omega(\bs{X}^n) - (n-m-1) \log |\mathcal{X}| \geq  0 \nonumber
\enspace.
\end{equation}

To prove Eq.~\eqref{eq:12}, note that by using Eqs.~\eqref{bound_TC},
\eqref{bound_DTC}, and \eqref{bound_o-info} one can find that
\begin{align}
\Omega(\bs{X}^n) = & C(\bs{X}^m) - B(\bs{X}^m|\bs{X}_{m+1}^n) \nonumber \\
&+ \sum_{j=m+1}^{n-1} I(X_j ; \bs{X}^{j-1} ; \bs{X}_{j+1}^n) \nonumber \\
\geq & C(\bs{X}^m) - (n-2) \log |\mathcal{X}|. \nonumber
\end{align}
Above, the inequality is due to $I(X_j ; \bs{X}^{j-1} ; \bs{X}_{j+1}^n) \leq
\log |\mathcal{X}|$ and $B(\bs{X}^m|\bs{X}_{m+1}^n) \leq (m-1) \log
|\mathcal{X}|$. As the above relationship does not depend on the labelling of
the $X$'s, this proves Eq.~\eqref{eq:12}. As $C(\bs{X}^m) \leq
(m-1)\log|\mathcal{X}|$, the equality becomes non-trivial when
\begin{equation}
\Omega(\bs{X}^n) + (n-2) \log |\mathcal{X}| \leq  (m-1)\log|\mathcal{X}| \nonumber
\enspace.
\end{equation}
\end{proof}

\begin{lemma}\label{lemma:bounds_C}
If $|\mathcal{X}|=\min_{i=1,\dots,n}|\mathcal{X}_i|$, then
\begin{equation}
\min_{|\bs{\gamma}|=m}C(\bs{X}^\gamma) \geq C(\bs{X}^n) - (n-m)\log|\mathcal{X}|~. \nonumber
\end{equation}
\end{lemma}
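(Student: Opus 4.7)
The plan is to exploit the definition $C(\bs{X}^n) = \sum_{j=1}^n H(X_j) - H(\bs{X}^n)$ directly: split the index set $\{1,\dots,n\}$ into $\bs{\gamma}$ and its complement, then apply two elementary inequalities. No appeal to the lattice machinery of Section~\ref{sec:decompositions} should be needed; the statement is of the same flavor as the bounds in Lemma~\ref{prop:bounds} and follows in one or two lines.

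Concretely, fix an arbitrary $\bs{\gamma}\subseteq\{1,\dots,n\}$ with $|\bs{\gamma}|=m$. First I would write
\begin{equation*}
C(\bs{X}^n) - C(\bs{X}^{\bs{\gamma}}) \;=\; \sum_{j\notin\bs{\gamma}} H(X_j) \;-\; \bigl[H(\bs{X}^n) - H(\bs{X}^{\bs{\gamma}})\bigr].
\end{equation*}
Next, the chain rule gives $H(\bs{X}^n) - H(\bs{X}^{\bs{\gamma}}) = H(\bs{X}^{\{1,\dots,n\}\setminus\bs{\gamma}}\,|\,\bs{X}^{\bs{\gamma}}) \geq 0$, so dropping this nonnegative bracket yields
\begin{equation*}
C(\bs{X}^n) - C(\bs{X}^{\bs{\gamma}}) \;\leq\; \sum_{j\notin\bs{\gamma}} H(X_j).
\end{equation*}
Finally, since each marginal entropy obeys $H(X_j) \leq \log|\mathcal{X}_j| \leq \log|\mathcal{X}|$, the right-hand side is bounded above by $(n-m)\log|\mathcal{X}|$. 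Because $\bs{\gamma}$ was arbitrary among subsets of size $m$, taking the minimum over such $\bs{\gamma}$ on the left delivers the claim.

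There is no substantive obstacle; the argument is essentially one application of the chain rule plus the uniform upper bound on the marginal entropies. The only subtlety worth flagging is a notational one: the appendix statement writes $|\mathcal{X}|=\min_i|\mathcal{X}_i|$, whereas the shorthand introduced in Section~\ref{sec:4} and used in the proof of Proposition~\ref{theo:bounds} takes $|\mathcal{X}|=\max_i|\mathcal{X}_i|$; the step $H(X_j)\leq \log|\mathcal{X}|$ requires the maximum, which is also the reading that makes the application to Proposition~\ref{theo:bounds} consistent.
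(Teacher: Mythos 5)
Your argument is correct and is essentially the paper's proof in different clothing: the paper expands $C(\bs{X}^n)=C(\bs{X}^m)+\sum_{j=m+1}^{n}I(X_j;\bs{X}^{j-1})$ along an assembly path and bounds each added term by $\log|\mathcal{X}|$, which is exactly your telescoping of the definition of $C$ (drop the nonnegative conditional entropy, bound each $H(X_j)$ by $\log|\mathcal{X}|$) followed by the observation that the labelling is arbitrary. You are also right to flag the hypothesis: both proofs need $|\mathcal{X}|=\max_i|\mathcal{X}_i|$, consistent with the shorthand of Section~\ref{sec:4}, so the ``$\min$'' in the lemma statement is a typo.
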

\begin{proof}
A direct calculation using Eq.~\eqref{bound_TC} shows that
\begin{align}
C(\bs{X}^n) &= C(\bs{X}^m) + \sum_{j=m+1}^n I(X_j;\bs{X}^{j-1}) \nonumber\\
&\leq C(\bs{X}^m) + (n-m)\log|\mathcal{X}|. \nonumber
\end{align}
As the labelling of the indices can be modified without changing this result,
this suffices to prove the desired result.
\end{proof}

\bibliographystyle{IEEEtran}
\bibliography{Rosas}

\end{document}